\newcommand\fullwidthdisplay{\displayindent0pt \displaywidth\columnwidth}
\everydisplay\expandafter{\expandafter\fullwidthdisplay\the\everydisplay}
\theoremstyle{plain}{

\newtheorem{proposition}{Proposition}[section]

\newtheorem{corollary}{Corollary}[section]
\newtheorem{lemma}{Lemma}[section]

\newtheorem{Assumption}{Assumption}[section]}
\theoremstyle{definition}{

}
\def\E{{\mathbb{E}}}
\def\P{{\mathbb{P}}}
\numberwithin{equation}{section}
\begin{document}

\author{Pengyu Wei\thanks{\rm Division of Banking and Finance, Nanyang Business School, Nanyang Technological University, Singapore. E-mail: pengyu.wei@ntu.edu.sg.}
\and
Zuo Quan Xu\thanks{\rm Department of Applied Mathematics, The Hong Kong Polytechnic University, Hong Kong. Email: maxu@polyu.edu.hk.}
} 
\title{Dynamic growth-optimum portfolio choice\\ under risk control} 

\date{\today}
\maketitle

\begin{abstract}
This paper studies a mean-risk portfolio choice problem for log-returns in a continuous-time, complete market. 
This is a growth-optimal problem with risk control. 
The risk of log-returns is measured by weighted Value-at-Risk (WVaR), which is a generalization of Value-at-Risk (VaR) and Expected Shortfall (ES).
We characterize the optimal terminal wealth up to the concave envelope of a certain function, and obtain analytical expressions for the optimal wealth and portfolio policy when the risk is measured by VaR or ES. In addition, we find that the efficient frontier is a concave curve that connects the minimum-risk portfolio with the growth optimal portfolio, as opposed to the vertical line when WVaR is used on terminal wealth. Our results advocate the use of mean-WVaR criterion for log-returns instead of terminal wealth in dynamic portfolio choice.

\bigskip

\noindent\textbf{Keywords}: Mean-risk portfolio choice; growth-optimum; log-return; weighted Value-at-Risk; efficient frontier; quantile formulation
\end{abstract}

\clearpage

\section{Introduction}
The growth-optimal portfolio (GOP) is a portfolio which has a maximal expected growth rate (namely log-return) over any time horizon. As the GOP can be usually tracked to the work \cite{K1956}, it is also called the ``Kelly criterion''. The GOP can also be obtained by maximizing log-utility which has a longer history. 
As the name implies, it can be used to maximize the expected growth rate of a portfolio. Indeed, it performs in some sense better than any other significantly different strategy as the time horizon increases. 
 
Over the past half century, a lot of papers have investigated the GOP. 
In theory and practice, the GOP has widely applications in a large number of areas including portfolio theory, utility theory, game theory, information theory, asset pricing theory, insurance theory. For instance, to name a few of them in the recent studies in the literature, \cite{A2000} study asset pricing problems in incomplete market; \cite{R2004} considers optimal investment problems; \cite{T2000} applies it to casino games.

We want to emphasize that the GOP ignores the relevant risk when maximizing the expected growth rate of a portfolio. 
It is the seminal work \cite{markowitz1952portfolio} that takes the trade-off between the portfolio return and its risk into consideration when an investor chooses a portfolio. \cite{markowitz1952portfolio} suggests to use variance to measure the risk. Since then, the mean-variance theory becomes one of the most dominant financial theories in the realm of portfolio choice. Besides variance, alternative risk measures have been proposed to measure the risk for portfolio choice. Research along this line includes \cite{rockafellar2000optimization}, \cite{campbell2001optimal}, \cite{rockafellar2002conditional}, \cite{alexander2002economic}, \cite{alexander2004comparison},
\cite{jin2006note}, and \cite{adam2008spectral},
where the authors study single-period mean-risk portfolio selection with various risk measures, such as semi-variance, value-at-risk (VaR), expected shortfall (ES), and spectral risk measures.

There also have numerous extensions of the mean-risk portfolio optimization from the single-period setting to the dynamic, continuous-time one 
\citep[e.g.][]{zhou2000continuous,bielecki2005continuous,jin2005continuous,basak2010dynamic,he2015dynamic,zhou2017dynamic,gao2017dynamic,dai2021dynamic,he2021mean}. In particular, \cite{he2015dynamic} study a continuous-time mean-risk portfolio choice when risk is measured by the weighted Value-at-Risk (WVaR) but their results are rather pessimistic. The WVaR is a quantile-based risk measure that generalizes VaR and ES, two popular risk measures in quantitative risk management. 
They find that, when using WVaR (including VaR and ES) on terminal wealth to measure portfolio risk, the mean-risk model is prone to be ill-posed (i.e., the optimal value is infinite) and the investor tends to take infinite leverage on risky assets, leading to extremely risk-taking behaviors. Furthermore, the optimal risk is independent of the expected terminal target so the efficient frontier is a vertical line on the mean-WVaR plane. Their results suggest that the mean-WVaR model is an improper modeling of the trade-off between return and risk, when the WVaR is applied to the terminal wealth.

This paper proposes a continuous-time portfolio choice model with mean-WVaR criterion for portfolio log-returns, as opposed to the mean-WVaR criterion for terminal wealth in \cite{he2015dynamic}. The motivation is two-fold. First, the mean-risk criterion for log-returns is consistent with Markowitz's original idea to use mean and variance on portfolio returns. We consider a growth-optimal problem with risk control. 
Moreover, many single-period mean-risk models use risk measures on portfolio returns in the literature \citep[e.g.][]{alexander2002economic,alexander2004comparison,alexander2006does,jin2006note,adam2008spectral}. 
However, there is a discrepancy between single-period and dynamic portfolio choice models, as the latter typically considers mean-risk criterion for terminal wealth; an exception is \cite{dai2021dynamic} who study continuous-time mean-variance portfolio choice for portfolio log-returns. We similarly adopt the mean-WVaR criterion for log-returns which naturally generalize the single-period return when returns are continuously compounded. Second, such a criterion conquers the ill-posedness of the model in \cite{he2015dynamic}. As noted in \cite{he2015dynamic}, the mean-WVaR criterion for terminal wealth is essentially a linear program for the quantile function of terminal wealth. This linearity, in turn, leads to the optimal terminal wealth's quantile function being ``corner points'', leading to extreme risk-taking behaviors. By contrast, our mean-WVaR criterion for log-returns is not linear in the quantile function of terminal wealth, and thus conquers the ill-posedness.

In a continuous-time, complete market framework, we solve the mean-WVaR portfolio choice for log-returns with the help of the so-called quantile formulation, developed in a series of papers \citep[e.g.][]{schied2004neyman,carlier2006law,jin2008behavioral,he2011portfolio,carlier2011optimal,xia2016arrow,xu2016note}. When risk is measured by a general WVaR risk measure, we characterize the optimal terminal wealth up to the concave envelope of a certain function through a detailed and involved analysis. When risk is measured by VaR or ES, two special cases of WVaR, we derive analytical expressions for the optimal terminal wealth and portfolio policy. The optimal terminal wealth turns out to be closely related to the growth optimal portfolio: the investor classifies market scenarios into different states, in which the terminal payoff can be constant, a multiple or fraction of the growth optimal portfolio. Furthermore, we obtain the efficient frontier, which is a concave curve that connects the minimum-risk (min-risk) portfolio with the growth optimal portfolio, as opposed to the vertical line in \cite{he2015dynamic}. Our model allows for a meaningful characterization of the risk-return trade-off and may serve as a guideline for investors to set a reasonable investment target. Although \cite{he2015dynamic} provides a critique of using WVaR to measure risk, our results advocate that it is more appropriate to use WVaR, in particular, VaR and ES, on portfolio log-returns instead of terminal wealth for dynamic portfolio choice.

The rest of the paper is organized as follows. In Section \ref{sec:model}, we propose a mean-WVaR portfolio choice problem for portfolio log-returns. We solve the problem in 
Section \ref{sec:solution} by quantile optimization method. Section \ref{sec:examples} presents optimal solutions and efficient frontiers when the risk is measured by VaR or ES. Some new financial insights and a comparation to the existing work are presented as well. Some concluding remarks are given in Section \ref{sec:conclusion}. Appendix \ref{sec:A1} contains three useful lemmas. All remaining proofs are placed in Appendix \ref{sec:A2}.

\section{Mean-WVaR portfolio choice model}\label{sec:model}

\subsection{Financial market}
Let $T>0$ be a given terminal time and $(\Omega,\mathcal{F}, \{ \mathcal{F}_t \}_{0 \le t \le T} ,\mathbb{P})$ be a filtered probability space, on which is defined a standard one-dimensional Brownian motion 
$\{ W_t \}_{0\le t\le T}$. It is assumed that $\mathcal{F}_t=\sigma \{ W_s, 0\le s\le t \}$ augmented by all $\mathbb{P}$-null sets and that 
$\mathcal{F}=\mathcal{F}_T$ is $\mathbb{P}$ complete. 

We consider a Black-Scholes market in which there are a risk-free asset and a risky asset (called stock). The risk-free asset pays a constant interest rate $r>0$ and the stock price $S$ follows a geometric Brownian motion
$$\frac{dS_t}{S_t}=\mu dt+\sigma dW_t, $$
where $\mu$ and $\sigma$, the appreciation rate and volatility of the stock, are positive constants. There exists a unique positive state price density (pricing kernel) process $\xi$\footnote{With additional assumptions on $\xi_T$, our main results can be extended to a general complete market with stochastic investment opportunities.} satisfying
\begin{equation}\label{eq:xi}
\frac{d\xi_t}{\xi_t}=-r dt-\theta dW_t, \quad \xi_{0}=1,
\end{equation}
where $\theta=(\mu-r)/\sigma$ is the market price of risk in the economy. Therefore the market is complete. 

Consider an economic agent with an initial endowment $x>0$ and faced an investment time horizon $[0,T]$. The agent chooses a dynamic investment strategy $\pi_t$, which represents the dollar amount invested in the stock at time $t$. Assume the trading is continuous in a self-financing fashion and there are no transaction costs. The agent's wealth process $X_t$ then follows a stochastic differential equation 
\begin{equation}\label{eq:budget}
dX_t=\left[ rX_t+(\mu-r) \pi_t \right] dt+\sigma \pi_t dW_t, ~ X_0=x.
\end{equation}
The portfolio process $\pi_t$ is called an admissible portfolio if it is $\{ \mathcal{F}_t \}_{0 \le t \le T}$ progressively measurable with $\int_0^T \pi_t ^2 dt < \infty, a.s.$, and the corresponding terminal wealth satisfies $X_T \ge 0, a.s$.

Let $R_T$ be the continuously compounded return (log-return) over the horizon $[0,T]$, i.e.,
\begin{equation}\label{eq:log-return}
R_T=\frac{1}{T} \ln \frac{X_T}{x}.
\end{equation}
By convention, we define
$$\ln 0=\lim_{s \downarrow 0} \ln s=-\infty \mbox{ and } e^{-\infty}=\lim_{s \downarrow-\infty}e^s=0.$$

\subsection{Risk measure}

We now introduce a risk measure that will be used in the portfolio choice model. In this paper, we focus on the weighted VaR (WVaR) risk measure proposed by \cite{he2015dynamic}, which is a generalization of Value-at-Risk (VaR) and Expected Shortfall (ES), and encompasses many well-known risk measures that are widely used in finance and actuarial sciences, such as spectral risk measures and distortion risk measures; see \cite{wei2018risk} for a review. 

For any $\mathcal{F}_T$-measurable random variable $X$, let $F_X$ denote its cumulative distribution function; and let $G_X$ denote its quantile function defined by
\begin{equation*}
	G_X(z)=\inf \{x \in \mathbb{R} : F_X(x) > z \}=\sup \{x\in \mathbb{R} : F_X(x) \le z \}, ~z \in [0, 1),
\end{equation*}
with the convention $G_X(1)=\lim_{z \uparrow 1} G_X(z)$. The quantile function $G_X$ is non-decreasing, right-continuous with left limits (RCLL). 

The WVaR risk measure for $X$ is defined as
\begin{equation}\label{eq:wvar}
\rho _{\Phi} (X)=-\int _{[0,1]} G_X(z) \Phi (dz),
\end{equation}
where $\Phi\in P[0,1]$ and $P[0,1]$ is the set of all probability measures on $[0,1]$. 

The WVaR is a law-invariant comonotonic additive risk measure, and it covers many law-invariant coherent risk measures; see \cite{he2015dynamic} for a more detailed discussion. If $\Phi$ is the Dirac measure at $\alpha$, i.e., $\Phi (A)=\mathbf{1}_{\alpha \in A}$, for all $A \subset [0,1]$, then the corresponding WVaR measure becomes the VaR at $\alpha$, in other words,
\begin{equation*}
\rho _{\Phi} (X)=\text{VaR}_{\alpha}(X)=-G_{X}(\alpha).
\end{equation*}
If $\Phi$ admits a density $\phi (z)=\frac{1}{\alpha}\mathbf{1}_{z \le \alpha}, ~ \forall z \in [0,1]$, then the corresponding WVaR measure becomes the ES, i.e.,
\begin{equation*}
\rho _{\Phi} (X)=\text{ES}_{\alpha} (X)=-\frac{1}{\alpha} \int_0^{\alpha} G_{X} (z)dz.
\end{equation*}

In the original paper of \cite{he2015dynamic}, WVaR is applied to measure the risk of a portfolio's terminal wealth. In this paper, we propose to apply WVaR to the portfolio's log-return instead of its terminal wealth. Let $X_T$ be the terminal wealth of a portfolio and $R_T$ be the log-return of $X_T$. Due to the monotonicity of logarithm functions, the quantile function of $R_T$ is
\begin{equation*}
	G_{R_T}(z)=\frac{1}{T} \ln \frac{ G_{X_T}(z)}{x}, ~ z \in [0,1].
\end{equation*}
Therefore, the WVaR of $R_T$ can be expressed as 
\begin{equation}\label{eq:wvar log-return}
	\rho _{\Phi} (R_T)=-\int _{[0,1]} \frac{1}{T} \ln \frac{ G_{X_T}(z)}{x} \Phi (dz)=-\frac{1}{T} \int _{[0,1]} \ln G_{X_T}(z) \Phi (dz)+\frac{1}{T} \ln x.
\end{equation}

However, the extension from terminal wealth to log-return is not straightforward as the integral in \eqref{eq:wvar log-return} may not be well-defined since $X_{T}$ may take the value of 0 with positive probability. Let
\begin{equation*}
	\begin{aligned}
		\mathbb{G}=\Big\{G(\cdot) \colon [0, 1] \to [0,+\infty], ~ &G\text{ is nondecreasing and RCLL on 	[0,1], }\\
		& \text{ left-continuous at } 1, \text{ and finite-valued on } [0,1)\Big\}
	\end{aligned}
\end{equation*}
be the set of quantile functions of all non-negative random variables, which include all terminal wealth of admissible portfolios. For any $G \in \mathbb{G}$ and $\Phi\in P[0,1]$, the integral $\int _{[0,1]} \ln G(z) \Phi (dz)$ is not well-defined if $G(s)=0$ for some $s \in[0,1]$ such that $\Phi ( [0,s] )>0$. Define
\begin{equation*}
		\mathbb{G}_{\Phi}=\big\{G \in \mathbb{G} \colon G(s)>0 \text{ if } \Phi ( [0, s] )>0, ~ \forall s \in[0,1] \big\}.
\end{equation*}
We set 
\begin{equation}\label{eq:-infty integral}
\int _{[0,1]} \ln G(z) \Phi (dz)=-\infty, ~ \forall G \in \mathbb{G} \setminus \mathbb{G}_{\Phi}.
\end{equation}

Intuitively, if the terminal wealth of a portfolio is $0$ (so that the log-return is $-\infty$) in some states, and the weighting measure $\Phi$ assigns non-zero weighs to these states, then the WVaR of the log-return is assumed to be $-\infty$. In particular, $\E [ R_T]=-\infty$ if $\mathbb{P} (X_T=0)>0$.\footnote{It is straightforward to verify $\E \left[ \max \left(R_T,0 \right) \right]<\infty$, given that $\xi_T$ is log-normally distributed. }

\subsection{Portfolio choice model}
We assume the agent chooses a dynamic portfolio strategy in the period $[0,T]$ to maximize the expected log-return while minimizing the risk of the portfolio's log-return. The risk is evaluated by a WVaR risk measure $\rho _{\Phi}$ on the portfolio's log-return $R_T$. Specifically, we consider the following dynamic portfolio choice problem
\begin{equation}\label{prob:original}
\begin{aligned}
\max _{\pi_t} ~ &~ \lambda \E [ R_T]-\rho _{\Phi} (R_T)\\
\text{subject to} ~ &~dX_t=\left[ rX_t+(\mu-r) \pi_t \right]dt+\sigma \pi_t dW_t, ~X_{T}\geq 0, ~ X_0=x,\\
&R_T=\frac{1}{T} \ln \frac{X_T}{x},
\end{aligned}
\end{equation}
where $\lambda \ge 0$ is a ``risk-tolerance" parameter that reflects the investor's tradeoff between return and risk.
This is a stochastic control problem, but not standard (namely, unlike those in \cite{yongzhou1999}) due to the existence of the nonlinear probability measure $\Phi$.

In view of the standard martingale method, e.g., \cite{karatzas1998methods}, we can first solve the following static optimization problem\footnote{This formulation implies that the optimal log-return $R_{T}$ is independent of $x$. }
\begin{equation}\label{prob:martingale}
\begin{aligned}
\max _{X_T \in \mathcal{F}_T} ~ &~ \lambda \E [ R_T]-\rho _{\Phi} (R_T) \\
\text{subject to} ~ &~\E [\xi_T X_T] \le x, ~X_{T}\geq 0, \\
&~R_T=\frac{1}{T} \ln \frac{X_T}{x},
\end{aligned}
\end{equation}
where $\xi_{T}$ is given by \eqref{eq:xi}. Then apply backward stochastic control theory to derive the corresponding optimal portfolio strategy $\pi_{t}$.

The optimization problem \eqref{prob:martingale} nests two special cases.

\begin{description}
\item[Case $\lambda=0$.] In this case the investor minimizes the risk without any consideration of the expected log-return, and solves the following minimum-risk problem
\begin{equation}\label{prob:min risk}
\begin{aligned}
\min _{X_T \in \mathcal{F}_T} ~ &~ \rho _{\Phi} (R_T)\\
\text{subject to} ~ &~\E [\xi_T X_T] \le x, ~X_{T}\geq 0, \\
&~R_T=\frac{1}{T} \ln \frac{X_T}{x}.
\end{aligned}
\end{equation}
The resulting portfolio is termed the min-risk portfolio.

\item[Case $\lambda=\infty$.] In this case the investor maximizes the expected log-return without any consideration of the risk. This is the so-called growth-optimal problem
\begin{equation}\label{prob:growth}
\begin{aligned}
\max _{X_T \in \mathcal{F}_T} ~ & \E [ R_T] \\
\text{subject to} ~ &\E [\xi_T X_T] \le x, ~X_{T}\geq 0, \\
&R_T=\frac{1}{T} \ln \frac{X_T}{x}.
\end{aligned}
\end{equation}

The optimal solution to \eqref{prob:growth} is well-known in the literature, i.e., the growth-optimal portfolio (or \cite{K1956} strategy) given by
\begin{equation}\label{eq:growth}
X_{\textrm{Kelly}}=\frac{x}{\xi_T}.
\end{equation}
The corresponding log-return is 
\begin{equation*}
R_{\textrm{Kelly}}=\frac{1}{T} \ln \frac{X_{\textrm{Kelly}}}{x}=-\frac{1}{T} \ln \xi_T,
\end{equation*}
and its expected value is
\begin{equation*}
\E [R_{\textrm{Kelly}}]=-\frac{1}{T} \E [\ln \xi_T]=r+\frac{\theta^2}{2}.
\end{equation*}
\end{description}

\section{Quantile formulation and optimal solution}\label{sec:solution}

In this section, we solve the optimization problem \eqref{prob:martingale} for $0\le \lambda<\infty$. 

If $\Phi (\{ 1 \})>0$, then $\rho _{\Phi} (R_{T})=-\infty$. If $\Phi$ is the the uniform measure on $[0,1]$, then $\rho _{\Phi} (R_T)=-\E [R_T]$ and the growth optimal portfolio \eqref{eq:growth} is optimal to \eqref{prob:martingale}. 
To exclude these trivial cases, we make the following assumption on $\Phi$ from now on.
\begin{Assumption}\label{assumption:phi}
	$\Phi (\{ 1 \})=0$ and $\Phi$ is not the uniform measure on $[0,1]$.
\end{Assumption}

The objective in \eqref{prob:martingale} is based on the quantile function of the log-return; thus, the standard convex duality method is not readily applicable. To overcome this difficulty, we employ the quantile formulation, developed in a series of papers
including \cite{schied2004neyman}, \cite{carlier2006law}, \cite{jin2008behavioral}, \cite{he2011portfolio}, \cite{carlier2011optimal}, \cite{xia2016arrow}, and \cite{xu2016note}, to change the decision variable in \eqref{prob:martingale} from the terminal wealth $X_T$ to its quantile function. This allows us to recover the hidden convexity of the problem and solve it completely. 

We first show that the budget constraint in \eqref{prob:martingale} must hold with equality and the objective function is improved with a higher level of initial wealth.

\begin{lemma}\label{lemma:3.1}
	If $X_T^{*}$ is an optimal solution to the problem \eqref{prob:martingale}, then $ \E [\xi_T X_T^{*}]=x$.
\end{lemma}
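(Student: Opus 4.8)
The plan is to argue by contradiction, exploiting the fact that multiplying the terminal wealth by a constant factor shifts the log-return by an \emph{additive} constant, and that both $\E[R_T]$ and $-\rho_\Phi(R_T)$ are translation-equivariant in the log-return. Scaling the wealth up until the budget binds will then strictly improve the objective.

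First I would record that the optimal value of \eqref{prob:martingale} is finite. Indeed, the Kelly payoff $X_{\textrm{Kelly}}=x/\xi_T$ from \eqref{eq:growth} is feasible (it satisfies $\E[\xi_T X_{\textrm{Kelly}}]=x$ and is strictly positive a.s.), and under Assumption \ref{assumption:phi} it yields a finite objective value $\lambda\E[R_{\textrm{Kelly}}]-\rho_\Phi(R_{\textrm{Kelly}})$. Hence any optimizer $X_T^{*}$ also attains a finite value, so in particular $\rho_\Phi(R_T^{*})>-\infty$; by the convention \eqref{eq:-infty integral} this forces $G_{X_T^{*}}\in\mathbb{G}_\Phi$, so $X_T^{*}$ is not identically zero and thus $x':=\E[\xi_T X_T^{*}]>0$. (When $\lambda>0$, finiteness of the objective together with $\rho_\Phi(R_T^*)$ finite also gives $\E[R_T^{*}]$ finite, which is needed below; when $\lambda=0$ the expectation term is absent.)

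Next, suppose toward a contradiction that the constraint is slack, i.e. $x'<x$. I would set $c=x/x'>1$ and define the scaled payoff $\hat X_T=c\,X_T^{*}$, which is $\mathcal{F}_T$-measurable, nonnegative, and satisfies $\E[\xi_T\hat X_T]=x$, hence feasible. Its log-return is $\hat R_T=\tfrac1T\ln(cX_T^{*}/x)=R_T^{*}+\tfrac1T\ln c$, a deterministic upward shift of $R_T^{*}$ by $\tfrac1T\ln c>0$. Since adding a constant to a random variable shifts its quantile function by that same constant and $\Phi$ is a probability measure, I obtain $\E[\hat R_T]=\E[R_T^{*}]+\tfrac1T\ln c$ and $\rho_\Phi(\hat R_T)=\rho_\Phi(R_T^{*})-\tfrac1T\ln c$. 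Therefore the objective at $\hat X_T$ exceeds that at $X_T^{*}$ by $(\lambda+1)\tfrac1T\ln c>0$ (strictly positive since $\lambda+1\ge 1$), contradicting the optimality of $X_T^{*}$. Consequently $\E[\xi_T X_T^{*}]=x$.

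The only delicate point is the finiteness established in the second step: the additive-shift computation is legitimate only because the objective at $X_T^{*}$ is finite (one cannot improve a value that is already $-\infty$ by a finite shift), and finiteness is precisely what guarantees $x'>0$, so that the scaling factor $c=x/x'$ is well defined and strictly larger than one. Once finiteness is in hand, the translation-equivariance of $\E[\cdot]$ and $\rho_\Phi$ on log-returns is routine, and the contradiction follows immediately.
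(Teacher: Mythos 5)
Your route is genuinely different from the paper's. The paper invests the unspent budget in a constant payoff, i.e.\ it considers $\tilde X_T = X_T^{*}+\bigl(x-\E[\xi_T X_T^{*}]\bigr)/\E[\xi_T]$, so that $\tilde X_T > X_T^{*}$ pointwise and both terms of the objective strictly improve by monotonicity. That construction needs neither $\E[\xi_T X_T^{*}]>0$ nor any preliminary finiteness discussion: the new wealth is bounded below by a strictly positive constant, so its objective value automatically exceeds $-\infty$ and hence exceeds whatever value $X_T^{*}$ attains. Your multiplicative rescaling instead converts the slack into a deterministic upward translation of the log-return, which is elegant, but it is exactly this choice that forces you to first prove $\E[\xi_T X_T^{*}]>0$ and finiteness of the value at $X_T^{*}$: the factor $c=x/x'$ is undefined if $X_T^{*}=0$ a.s., and a translation cannot improve a value of $-\infty$.

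The gap is in your finiteness certificate. Under Assumption \ref{assumption:phi} the Kelly portfolio need \emph{not} have a finite objective value: the assumption permits $\Phi(\{0\})>0$ (it only rules out an atom at $1$ and the uniform measure), and since $R_{\textrm{Kelly}}=-\frac{1}{T}\ln \xi_T$ is unbounded below ($\xi_T$ being log-normal), $G_{R_{\textrm{Kelly}}}(0)=-\infty$, whence $\rho_{\Phi}(R_{\textrm{Kelly}})=+\infty$ and the Kelly objective equals $-\infty$. Even with $\Phi(\{0\})=0$ the claim can fail: a measure putting mass $2^{-k}$ at $z_k=e^{-16^k}$ gives $\int_{[0,1]} G_{R_{\textrm{Kelly}}}(z)\,\Phi(dz)=-\infty$, because the normal quantile only decays like $-\sqrt{2\ln (1/z)}$ as $z\downarrow 0$. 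So the step ``the optimal value is finite because Kelly is feasible with finite value'' is false for a nonempty class of $\Phi$ covered by the lemma. The repair is cheap: certify finiteness with the risk-free payoff $X_T=x/\E[\xi_T]$ instead, whose log-return is the constant $\frac{1}{T}\ln\bigl(1/\E[\xi_T]\bigr)$, so that $\rho_{\Phi}$ of it is finite for \emph{every} probability measure $\Phi$; your positivity argument ($G_{X_T^{*}}\in\mathbb{G}_{\Phi}$ implies $x'>0$) and the translation step then go through, modulo the same extreme measure-theoretic edge cases (e.g.\ $\Phi$ making the supremum $+\infty$) that the paper's own proof also glosses over.
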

\noindent
All the proofs of our results are put in Appendix \ref{sec:A2}.

Denote by $F_\xi$ and $G_\xi$ the distribution and quantile functions of $\xi_T$, respectively. With slight abuse of notation, we suppress the subscript $T$ when there is no need to emphasize the dependence on $T$. Since $\xi_{T}$ is log-normally distributed, both $F_\xi$ and $G_\xi$ are $C^{\infty}$ functions. 
The following lemma can be found in \cite{jin2008behavioral}.

\begin{lemma}[\cite{jin2008behavioral}]\label{lemma:3.2} We have
	$\E \left[ \xi_T G_X \left(1-F_{\xi}(\xi_T) \right) \right] \le \E[\xi_T X_T]$ for any lower bounded
	random variable $X_T$ whose quantile function is $G_X$. Furthermore, if $\E [\xi_T G_X(1-F_{\xi}(\xi_T))] < \infty$, then the inequality
	becomes equality if and only if $X_T=G_X \left(1-F_{\xi}(\xi_T) \right), ~a.s.$
\end{lemma}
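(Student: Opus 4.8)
The plan is to recognize the stated inequality as a Hardy--Littlewood / Fr\'echet--Hoeffding rearrangement bound: among all couplings of a random variable with the law of $X_T$ and a random variable with the law of $\xi_T$, the \emph{counter-monotone} coupling minimizes the expected product $\E[\xi_T X_T]$, and the candidate $Y:=G_X(1-F_\xi(\xi_T))$ is precisely that counter-monotone rearrangement. So the whole lemma reduces to (i) showing $Y$ has the same law as $X_T$ while being anti-comonotone with $\xi_T$, (ii) proving the minimizing property, and (iii) identifying when the minimizer is unique.

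First I would exploit the log-normality of $\xi_T$: its distribution function $F_\xi$ is continuous and strictly increasing, so $U:=F_\xi(\xi_T)$ is uniform on $[0,1]$ and $\xi_T=G_\xi(U)$ almost surely. Since $1-U$ is then also uniform, the elementary identity $\P(G_X(V)>t)=\P(V\ge F_X(t))=1-F_X(t)$ for a uniform $V$ (which follows from $G_X(z)>t\iff z\ge F_X(t)$ for the quantile convention used here) shows $Y=G_X(1-U)$ shares the law of $X_T$; moreover $Y$ is a nonincreasing function of $\xi_T$. Using lower boundedness of $X_T$, I would replace $X_T$ by $X_T+M\ge0$ for a suitable constant $M$; this shifts both sides of the inequality by the finite amount $M\,\E[\xi_T]=M e^{-rT}$ and replaces $G_X$ by $G_X+M$ and $Y$ by $Y+M$, so there is no loss in assuming $X_T\ge0$ and $Y\ge0$.

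The core estimate would come from the layer-cake representation: for nonnegative variables, $\E[\xi_T X_T]=\int_0^\infty\!\!\int_0^\infty \P(\xi_T>s,\,X_T>t)\,ds\,dt$ by Tonelli, and similarly for $\E[\xi_T Y]$. I would then invoke the Fr\'echet--Hoeffding lower bound $\P(\xi_T>s,X_T>t)\ge\max(0,\,\P(\xi_T>s)+\P(X_T>t)-1)$, valid for any coupling, and check that the counter-monotone pair $(\xi_T,Y)$ attains it: from $\{\xi_T>s\}=\{U>F_\xi(s)\}$ and $\{Y>t\}=\{U\le 1-F_X(t)\}$ one computes $\P(\xi_T>s,Y>t)=\max(0,1-F_\xi(s)-F_X(t))$. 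Since $X_T$ and $Y$ have the same marginal $F_X$, integrating the pointwise comparison gives $\E[\xi_T X_T]\ge\E[\xi_T Y]$, which is the asserted inequality.

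For the equality statement, assume $\E[\xi_T Y]<\infty$. Then the nonnegative integrand $\P(\xi_T>s,X_T>t)-\P(\xi_T>s,Y>t)$ has vanishing double integral, hence equals $0$ for a.e.\ $(s,t)$, and by right-continuity in $(s,t)$ it vanishes for all $(s,t)$; thus $(\xi_T,X_T)$ and $(\xi_T,Y)$ have the same joint law. As $Y=g(\xi_T)$ is a deterministic function of $\xi_T$, the conditional law of $X_T$ given $\xi_T$ must be the point mass at $g(\xi_T)$, forcing $X_T=G_X(1-F_\xi(\xi_T))$ a.s.; the converse is immediate. I expect the delicate points to be the measure-theoretic bookkeeping at jumps of $F_X$ when verifying that the counter-monotone coupling attains the Fr\'echet bound, together with the upgrade from almost-everywhere equality of the survival functions to almost-sure equality of the random variables.
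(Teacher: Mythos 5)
Your proposal is correct, but note that the paper itself offers no proof to compare against: Lemma \ref{lemma:3.2} is stated as a quoted result from Jin and Zhou (2008), and the text simply says it ``can be found in'' that reference. Your blind argument is essentially the standard proof of this Hardy--Littlewood-type rearrangement bound, and it is sound: the reduction to nonnegative variables via the lower bound and finiteness of $\E[\xi_T]$, the layer-cake identity $\E[\xi_T X_T]=\int_0^\infty\int_0^\infty \P(\xi_T>s, X_T>t)\,ds\,dt$, the Fr\'echet--Hoeffding lower bound, and the verification that the counter-monotone candidate $Y=G_X(1-F_\xi(\xi_T))$ attains it (using that $F_\xi$ is continuous and strictly increasing, so $U=F_\xi(\xi_T)$ is uniform) all go through. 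The equality case is also handled correctly: finiteness of $\E[\xi_T Y]$ justifies subtracting the integrals, a.e.\ vanishing of the nonnegative right-continuous difference of joint survival functions upgrades to everywhere vanishing, equality of joint laws follows, and since the law of $(\xi_T,Y)$ is carried by the graph of a Borel function of $\xi_T$, so is that of $(\xi_T,X_T)$, forcing $X_T=G_X(1-F_\xi(\xi_T))$ a.s. One small wrinkle: the pointwise equivalence $G_X(z)>t\iff z\ge F_X(t)$ you invoke can fail exactly at $z=F_X(t)$ (depending on whether $F_X$ has a flat piece to the right of $t$), but since $U$ is atomless this only matters on a Lebesgue-null set of $z$ and none of the probability computations are affected --- you flag this yourself, and it is a cosmetic rather than substantive issue.
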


From Lemmas \ref{lemma:3.1} and \ref{lemma:3.2}, we know that if $X_T$ is optimal to \eqref{prob:martingale}, then $X_T=G_X(1-F_{\xi}(\xi_T))$ where $G_X$ is the quantile function of $X_T$. Let $R_T$ be the log-return of $X_T$. We have
\begin{equation*}
	\E [R_T]=\int_{[0,1)} \frac{1}{T} \ln \frac{ G_{X}(z)}{x} dz,
\end{equation*}
\begin{equation*}
	\rho _{\Phi} (R_T)=-\int _{[0,1)} \frac{1}{T} \ln \frac{ G_{X}(z)}{x} \Phi (dz),
\end{equation*}
and
\begin{equation*}
	\E[\xi_T X_T]=\int_{[0,1)} G_X(z) G_{\xi} (1-z)dz.
\end{equation*}
Therefore, we can consider the following quantile formulation of \eqref{prob:martingale}
\begin{equation}\label{prob:quantile}
\begin{aligned}
\max _{G \in \mathbb{G} } ~ & \lambda \int_{[0,1)} \frac{1}{T} \ln \frac{ G(z)}{x} dz+\int _{[0,1)} \frac{1}{T} \ln \frac{ G(z)}{x} \Phi (dz) \\
\text{subject to} ~ 
& \int_{[0,1)} G(z) G_{\xi} (1-z)dz=x,
\end{aligned}
\end{equation}
where the decision variable $G$ is the quantile function of the terminal wealth. Once we obtain the optimal solution $G^{*}$ to \eqref{prob:quantile}, then the optimal solution to \eqref{prob:martingale} is given by $$X_T^{*}=G^{*} \left( 1-F_{\xi}(\xi_T) \right).$$

Define
\begin{equation}
w(s)=\frac{ \int_{[0,s)} G_{\xi} (1-z)dz }{\E [\xi_T]}, ~ s \in [0,1].
\end{equation}
Because $\xi_{T}$ is log-normally distributed, $w$ is a $C^{\infty}$ function with $w(0)=0$, $w(1)=1$ 
and $w'>0$, $w''<0$ on $(0,1)$. Let $w^{-1}$ be the inverse function of $w$ and define $$H(s)=G \left( w^{-1} (s) \right), ~ s \in [0,1].$$ Then $w^{-1}$ is a $C^{\infty}$ function with $w^{-1}(0)=0$, $w^{-1}(1)=1$, and $(w^{-1})'>0$, $(w^{-1})''>0$ on $(0,1)$. 
It is easy to see $G \in \mathbb{G}$ if and only if $H \in \mathbb{G}$, and $G \in \mathbb{G}_{\Phi}$ if and only if $H \in \mathbb{H}_{\Phi}$, where
\begin{equation*}
	\mathbb{H}_{\Phi}=\Big\{H \in \mathbb{G} \colon H\left( w(s) \right)>0 \text{ if } \Phi ( [0,s] )>0, ~ \forall s \in[0,1] \Big\}.
\end{equation*}
In terms of new notation, we have
\begin{equation*}
	\int_{[0,1)} G(z) G_{\xi} (1-z)dz=\E [\xi_T] \int_{[0,1)} H(s) ds,
\end{equation*}
\begin{equation*}
\int_{[0,1)} \frac{1}{T} \ln \frac{ G(z)}{x} dz=\int_{[0,1)} \frac{1}{T} \ln \frac{ H(s)}{x} dw^{-1} (s) ,
\end{equation*}
and
\begin{equation*}
\int _{[0,1)} \frac{1}{T} \ln \frac{ G(z)}{x} \Phi (dz)=\int _{[0,1)} \frac{1}{T} \ln \frac{ H(s)}{x} d \Phi ([0,w^{-1} (s)]). 
\end{equation*}
Consequently, solving \eqref{prob:martingale} reduces to solving the following quantile optimization problem (after dropping constant terms)
\begin{equation}\label{prob:quantile H}
\begin{aligned}
\max _{H \in \mathbb{G} } ~ &~ \lambda \int_{[0,1)} \ln H(s) dw^{-1} (s)+\int _{[0,1)} \ln H(s) d \Phi ([0,w^{-1} (s)]) \\
\text{subject to} ~ 
&~ \int_{[0,1)} H(s)ds=\frac{x}{\E [\xi_T]}.
\end{aligned}
\end{equation}
This is a concave optimization problem, so it can be tackled by the Lagrange method. 

Define the Lagrangian
\begin{equation*}
L(H(\cdot) ; \lambda, \eta)=\lambda \int_{[0,1)} \ln H(s) dw^{-1} (s)+\int _{[0,1)} \ln H(s) d \Phi ([0,w^{-1} (s)])-\eta \int_{[0,1)}H (s) ds,
\end{equation*}
where $\eta > 0$ is a Lagrange multiplier to fit the budget constraint in \eqref{prob:quantile H}. 
Define
\begin{equation*}
\varphi ( s ; \lambda)=\frac{ \Phi ([0,w^{-1} (s)])+\lambda w^{-1} (s)}{1+\lambda}, ~ s \in [0,1],
\end{equation*}
and its left-continuous version
\begin{equation*}
\varphi ( s-; \lambda)=\frac{ \Phi ([0,w^{-1} (s)))+\lambda w^{-1} (s)}{1+\lambda}, ~ s \in (0,1].
\end{equation*}
We additionally set $\varphi ( 0-; \lambda)=0$. 

We then have
\begin{equation*}
L(H(\cdot) ; \lambda, \eta)=(1+\lambda) \int _{[0,1)} \ln H(s) d\varphi ( s ; \lambda)-\eta \int_{[0,1)} H (s) ds,
\end{equation*}
and we can consider the following optimization problem
\begin{equation}\label{prob:Lagrangian}
\max_{H \in \mathbb{G} }~ L(H(\cdot) ; \lambda, \eta). 
\end{equation}

Inspired by \cite{rogers2009optimal}, \cite{xu2016note}, and \cite{wei2018risk}, we introduce $\delta (s; \lambda)$, the convex envelope function of $\varphi ( s-; \lambda)$ on $[0,1]$, given by 
\begin{equation}\label{eq:concave envelope}
\delta (s; \lambda)=\sup_{0 \le a \le s \le b \le 1} \frac{(b-s)\varphi (a-; \lambda)+(s-a)\varphi (b-; \lambda)}{b-a}, ~s \in [0,1].
\end{equation}
The convex envelope $\delta (s; \lambda)$ is the largest convex function dominated by $\varphi ( s-; \lambda)$, and is affine on the set $\big\{s \in (0,1) \colon \delta (s; \lambda) < \varphi ( s-; \lambda)\big\}.$

The following proposition presents the optimal solution to \eqref{prob:Lagrangian}.
\begin{proposition}\label{prop:3.1}
The optimal solution to \eqref{prob:Lagrangian} is given by $$H^{*} (s; \lambda , \eta)=\frac{1+\lambda}{\eta} \delta' (s; \lambda), ~ s \in [0,1],$$ where $ \delta' (s; \lambda)$ is the right derivative of $\delta (s; \lambda)$ with respect to $s$.
\end{proposition}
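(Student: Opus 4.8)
The plan is to prove optimality of $H^{*}$ by a two-step comparison that exploits the convex envelope. First I replace the measure $d\varphi(\cdot;\lambda)$ appearing in $L$ by the measure $d\delta(\cdot;\lambda)$ generated by the envelope, paying an inequality that points the right way; then I solve the resulting problem by pointwise maximization and observe that its pointwise maximizer is precisely $H^{*}$ and is feasible. As a preliminary, note that since $\delta(\cdot;\lambda)$ is convex on $[0,1]$, its right derivative $\delta'(\cdot;\lambda)$ is nondecreasing and right-continuous, and it is nonnegative because the envelope of the nondecreasing function $\varphi(\cdot-;\lambda)$ is itself nondecreasing. Hence $H^{*}=\frac{1+\lambda}{\eta}\delta'$ is nonnegative, nondecreasing and RCLL, so $H^{*}\in\mathbb{G}$. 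I would also record the endpoint identities $\delta(0;\lambda)=\varphi(0-;\lambda)=0$ and $\delta(1;\lambda)=\varphi(1-;\lambda)=1$ (the latter using $\Phi(\{1\})=0$ from Assumption~\ref{assumption:phi}), so that $\int_{[0,1)}\delta'(s;\lambda)\,ds=\delta(1;\lambda)-\delta(0;\lambda)=1$.

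The central estimate is that, for every nondecreasing $H$,
$$\int_{[0,1)}\ln H(s)\,d\varphi(s;\lambda)\ \le\ \int_{[0,1)}\ln H(s)\,d\delta(s;\lambda).$$
I would prove this by integration by parts: since $\ln H$ is nondecreasing, $d\ln H\ge 0$; since $\varphi(\cdot-;\lambda)$ and $\delta$ agree at both endpoints, the boundary terms cancel, and the difference equals $-\int_{[0,1)}\big(\varphi(s-;\lambda)-\delta(s;\lambda)\big)\,d\ln H(s)\le 0$, because $\delta\le\varphi(\cdot-;\lambda)$ by definition of the envelope. (If $H\notin\mathbb{H}_{\Phi}$ the left-hand integral is $-\infty$ by convention \eqref{eq:-infty integral} and the inequality is trivial, so I may assume $H\in\mathbb{H}_{\Phi}$.) Using $d\delta=\delta'\,ds$, this upgrades to
$$L(H;\lambda,\eta)\ \le\ \int_{[0,1)}\Big[(1+\lambda)\,\delta'(s;\lambda)\ln H(s)-\eta\,H(s)\Big]\,ds\ =:\ \widetilde L(H).$$

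Now $\widetilde L$ decouples across $s$: for fixed $s$ the map $h\mapsto(1+\lambda)\delta'(s;\lambda)\ln h-\eta h$ is concave in $h>0$ and maximized at $h=\frac{1+\lambda}{\eta}\delta'(s;\lambda)=H^{*}(s)$ (and at $h=0=H^{*}(s)$ when $\delta'(s;\lambda)=0$). Hence $\widetilde L(H)\le\widetilde L(H^{*})$ for all $H$. Finally I would check that the envelope inequality above is an \emph{equality} at $H=H^{*}$: by the property stated after \eqref{eq:concave envelope}, $\delta$ is affine on $\{\delta<\varphi(\cdot-;\lambda)\}$, so $\delta'$ and therefore $H^{*}$ are constant on each such interval; thus $d\ln H^{*}$ charges only the contact set $\{\delta=\varphi(\cdot-;\lambda)\}$, where $\varphi(\cdot-;\lambda)-\delta=0$, giving $\int_{[0,1)}\ln H^{*}\,d\varphi=\int_{[0,1)}\ln H^{*}\,d\delta$ and hence $L(H^{*};\lambda,\eta)=\widetilde L(H^{*})$. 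Chaining the three relations yields $L(H;\lambda,\eta)\le\widetilde L(H)\le\widetilde L(H^{*})=L(H^{*};\lambda,\eta)$ for all $H\in\mathbb{G}$, which is the claim.

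The main obstacle is the rigorous bookkeeping in the two displayed steps rather than the idea. I must justify the integration by parts for the Lebesgue–Stieltjes measures $d\varphi$, $d\delta$, $d\ln H$ in the presence of atoms of $\Phi$ (hence jumps of $\varphi$) and of flat stretches of $\delta'$, confirm the endpoint/left-limit conventions so the boundary terms truly vanish (taking care of a possible atom of $\varphi$ at $0$), and verify that all integrals are well defined. In particular I must show $H^{*}\in\mathbb{H}_{\Phi}$, i.e. that the initial region $\{\delta'=0\}=\{\delta=0\}$, on which $\ln H^{*}=-\infty$, carries no $\varphi$-mass; this should follow from monotonicity of $\varphi(\cdot-;\lambda)$ together with the minorant property forcing $\delta\equiv0$ exactly where $\varphi(\cdot-;\lambda)\equiv0$. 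The equality step likewise relies on the precise affine-off-the-contact-set description of $\delta$ recorded immediately after \eqref{eq:concave envelope}.
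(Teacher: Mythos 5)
Your proposal is correct and takes essentially the same approach as the paper's proof: the same envelope inequality $\int\ln H\,d\varphi\le\int\ln H\,d\delta$ (obtained there via Fubini's theorem, which is your integration by parts), the same pointwise maximization of $h\mapsto(1+\lambda)\delta'(s;\lambda)\ln h-\eta h$, the same equality argument at $H^{*}$ from $\delta'$ being constant off the contact set, and the same positivity check on $\delta'$. The paper merely executes the bookkeeping you flag as obstacles by splitting into the cases $\lambda=0$ and $0<\lambda<\infty$ (with sub-cases for atoms of $\Phi$ at $0$ and at $z_{\Phi}$); the one imprecision in your sketch, namely that $\{\delta'=0\}$ can be strictly smaller than $\{\delta=0\}$ at its right endpoint when $\Phi$ has an atom there, is harmless because $\delta'>0$ at such a point.
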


We want to find a Lagrange multiplier $\eta$ such that $H^{*} (s; \lambda , \eta)$ satisfies the budget constraint in \eqref{prob:quantile H}. 
Clearly, \begin{equation*}
	 \int_{[0,1)} H^{*} (s; \lambda , \eta)ds=\int_{[0,1)} \frac{1+\lambda}{\eta} \delta' (s; \lambda)ds=\frac{1+\lambda}{\eta}=\frac{x}{\E [\xi_T]}. 
\end{equation*}
and consequently
$$\eta=\frac{1+\lambda}{x}\E [\xi_T].$$

We are ready to state the optimal solution to \eqref{prob:quantile H}. 

\begin{proposition}\label{prop:3.2}
	The optimal solution to \eqref{prob:quantile H} is given by
	\begin{equation*}
		H^{*} \left(s; \lambda , \frac{1+\lambda}{x}\E [\xi_T] \right)=\frac{x}{\E [\xi_T]}\delta' (s; \lambda) .
	\end{equation*}
\end{proposition}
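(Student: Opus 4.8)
The plan is to derive Proposition \ref{prop:3.2} from Proposition \ref{prop:3.1} by a Lagrangian sufficiency argument, exploiting the fact that \eqref{prob:quantile H} carries a single equality budget constraint. The idea is that the unconstrained Lagrangian maximizer, once its free multiplier is tuned so that it happens to be budget-feasible, automatically solves the constrained problem.

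First I would fix the multiplier at $\eta=\frac{1+\lambda}{x}\E[\xi_T]$ and read off from Proposition \ref{prop:3.1} the corresponding maximizer of $L(H(\cdot);\lambda,\eta)$ over $H\in\mathbb{G}$, namely
\begin{equation*}
H^{*}\Big(s;\lambda,\tfrac{1+\lambda}{x}\E[\xi_T]\Big)=\frac{1+\lambda}{\eta}\,\delta'(s;\lambda)=\frac{x}{\E[\xi_T]}\,\delta'(s;\lambda).
\end{equation*}
Next I would check that this $H^{*}$ is feasible for \eqref{prob:quantile H}. From the definition \eqref{eq:concave envelope} one reads off the endpoint values $\delta(0;\lambda)=\varphi(0-;\lambda)=0$ and $\delta(1;\lambda)=\varphi(1-;\lambda)=1$, the latter using $w^{-1}(1)=1$ together with $\Phi(\{1\})=0$ from Assumption \ref{assumption:phi}, which gives $\Phi([0,1))=1$. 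Hence $\int_{[0,1)}\delta'(s;\lambda)\,ds=\delta(1;\lambda)-\delta(0;\lambda)=1$ and $\int_{[0,1)}H^{*}\,ds=\frac{x}{\E[\xi_T]}$, which is exactly the computation already recorded just before the statement. Along the way I would also confirm $H^{*}\in\mathbb{G}$: since $\delta(\cdot;\lambda)$ is convex, its right derivative $\delta'$ is nondecreasing, right-continuous, and finite on $[0,1)$, and it is nonnegative because $\delta\ge0$ attains its minimum $0$ at the left endpoint.

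Finally I would close the argument by the standard sufficiency reasoning. For any feasible $H$, the objective of \eqref{prob:quantile H} equals $L(H(\cdot);\lambda,\eta)+\eta\,\frac{x}{\E[\xi_T]}$, and the additive term is constant on the feasible set since $\int_{[0,1)}H\,ds$ is pinned to $\frac{x}{\E[\xi_T]}$. Maximizing the objective over feasible $H$ is therefore equivalent to maximizing $L$ over feasible $H$; as $H^{*}$ maximizes $L$ over the whole of $\mathbb{G}$ by Proposition \ref{prop:3.1} and is itself feasible, it must maximize the objective over the feasible set, and so solves \eqref{prob:quantile H}. The heavy lifting having been done in Proposition \ref{prop:3.1}, I expect no serious obstacle; the one point needing care is the logical equivalence just described, namely that optimality over the unconstrained class $\mathbb{G}$ combined with feasibility yields optimality over the constrained class, which is precisely what makes the equality-constrained Lagrangian relaxation tight here.
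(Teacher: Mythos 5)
Your proposal is correct and follows essentially the same route as the paper: the paper likewise writes the objective of \eqref{prob:quantile H} for any budget-feasible $H$ as $L\left(H(\cdot);\lambda,\tfrac{1+\lambda}{x}\E[\xi_T]\right)+(1+\lambda)$, notes the same identity for $H^{*}$ (whose feasibility, $\int_{[0,1)}H^{*}\,ds=x/\E[\xi_T]$, is verified just before the proposition using $\delta(1;\lambda)-\delta(0;\lambda)=1$), and concludes by invoking Proposition \ref{prop:3.1}. Your additional checks (the endpoint values of $\delta$ and that $\delta'$ gives an element of $\mathbb{G}$) are details the paper handles in the proof of Proposition \ref{prop:3.1} and the surrounding text, so nothing is missing.
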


Finally, the optimal solution to \eqref{prob:martingale} is given by
\begin{equation*}
X^{*}_{T,\lambda}=H^{*} \left( w(1-F_{\xi}(\xi_T); \lambda , \frac{1+\lambda}{x}\E [\xi_T] \right)=\frac{x}{\E [\xi_T]}\delta' \left(w(1-F_{\xi}(\xi_T); \lambda \right).
\end{equation*}
In particular, we can obtain the min-risk portfolio by setting $\lambda=0$: 
\begin{equation*}
X_{T,0}^{*}=\frac{x}{\E [\xi_T]} \delta' (w (1-F_{\xi}(\xi_T)); 0),
\end{equation*}
which solves \eqref{prob:min risk}.

We summarize the main results of the paper in the following proposition.

\begin{proposition}[Efficient portfolio]\label{prop:efficient}
	The efficient portfolio, i.e., the optimal solution to \eqref{prob:martingale} is
	\begin{equation*}
	X^{*}_{T,\lambda}=\frac{x}{\E [\xi_T]} \delta' (w (1-F_{\xi}(\xi_T)); \lambda ),
	\end{equation*}
	and the corresponding log-return is $$R^{*}_{T,\lambda}=\frac{1}{T} \ln \left( \frac{ \delta' (w (1-F_{\xi}(\xi_T)); \lambda ) }{\E [\xi_T]} \right).$$
	In particular, 	the min-risk portfolio, i.e., the optimal solution to \eqref{prob:min risk} is
	\begin{equation*}
	X_{T,0}^{*}=\frac{x}{\E [\xi_T]} \delta' (w (1-F_{\xi}(\xi_T)); 0),
	\end{equation*}
	and the corresponding log-return is $$R^{*}_{T,0}=\frac{1}{T} \ln \left( \frac{ \delta' (w (1-F_{\xi}(\xi_T));0 ) }{\E [\xi_T]} \right).$$
\end{proposition}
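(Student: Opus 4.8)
The plan is to assemble the reduction carried out in this section rather than to solve anything new: Proposition~\ref{prop:efficient} consolidates Lemmas~\ref{lemma:3.1}--\ref{lemma:3.2}, the change of variables $H=G\circ w^{-1}$, and Propositions~\ref{prop:3.1}--\ref{prop:3.2}. First I would recall that by Lemma~\ref{lemma:3.1} the budget constraint binds, so any optimizer of \eqref{prob:martingale} satisfies $\E[\xi_T X_T]=x$; by Lemma~\ref{lemma:3.2} such an optimizer must be anti-comonotone with $\xi_T$, i.e.\ $X_T=G_X(1-F_\xi(\xi_T))$. This licenses passing to the quantile problem \eqref{prob:quantile} and, after the substitution $H(s)=G(w^{-1}(s))$, to the concave problem \eqref{prob:quantile H}, for which Proposition~\ref{prop:3.2} supplies the optimizer.

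Next I would invert the reduction. Proposition~\ref{prop:3.2} gives the optimal $H^{*}(s)=\frac{x}{\E[\xi_T]}\delta'(s;\lambda)$ for \eqref{prob:quantile H}. Since $G(z)=H(w(z))$, the optimal quantile function of terminal wealth is $G^{*}(z)=\frac{x}{\E[\xi_T]}\delta'(w(z);\lambda)$, and Lemma~\ref{lemma:3.2} identifies the optimal terminal wealth as $X^{*}_{T,\lambda}=G^{*}(1-F_\xi(\xi_T))=\frac{x}{\E[\xi_T]}\delta'(w(1-F_\xi(\xi_T));\lambda)$, matching the formula displayed just before the statement. The log-return formula is then immediate algebra, $R^{*}_{T,\lambda}=\frac{1}{T}\ln\frac{X^{*}_{T,\lambda}}{x}=\frac{1}{T}\ln\frac{\delta'(w(1-F_\xi(\xi_T));\lambda)}{\E[\xi_T]}$, and the min-risk statements follow by specializing to $\lambda=0$, which solves \eqref{prob:min risk}.

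The step that needs genuine care, and which I would treat as the crux, is verifying that solving the unconstrained Lagrangian \eqref{prob:Lagrangian} actually delivers the constrained optimum of \eqref{prob:quantile H}. Here I would invoke Lagrangian sufficiency: the objective of \eqref{prob:quantile H} is concave in $H$ (as $\ln$ is concave) while the budget constraint is affine, so for the multiplier $\eta=\frac{1+\lambda}{x}\E[\xi_T]$ chosen to render $H^{*}$ feasible, Proposition~\ref{prop:3.1} gives $L(H^{*};\lambda,\eta)\ge L(H;\lambda,\eta)$ for every $H$; and since any feasible $H$ shares the same budget value as $H^{*}$, the multiplier terms cancel and the inequality descends to the objective alone, proving optimality of $H^{*}$. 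Alongside this I would check admissibility: monotonicity and right-continuity of $H^{*}$ are inherited from $\delta$ being convex (so $\delta'$ is nondecreasing and RCLL), and—most delicately—one must confirm $H^{*}\in\mathbb{H}_\Phi$ so that the objective is not $-\infty$, which amounts to $\delta'(w(s);\lambda)>0$ whenever $\Phi([0,s])>0$. This positivity I would read off from the strict monotonicity of $\varphi(\cdot;\lambda)$ on that $\Phi$-relevant region.

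In short, the obstacle is conceptual rather than computational: the algebraic transport of $H^{*}$ back to $X^{*}_{T,\lambda}=H^{*}(w(1-F_\xi(\xi_T));\lambda,\eta)$ is routine once the change-of-variables chain is in place, but one must certify that the Lagrangian relaxation is tight (no duality gap, guaranteed by concavity plus feasibility of the candidate) and that the optimal wealth remains strictly positive on the $\Phi$-charged scenarios, i.e.\ $X^{*}_{T,\lambda}\in\mathbb{G}_\Phi$, so that the WVaR of the log-return stays finite.
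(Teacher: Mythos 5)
Your proposal is correct and follows exactly the paper's own route: the paper gives no separate proof of Proposition~\ref{prop:efficient}, stating it as a summary of the in-text chain you assemble --- Lemmas~\ref{lemma:3.1}--\ref{lemma:3.2}, the substitution $H = G\circ w^{-1}$ leading to \eqref{prob:quantile H}, and Propositions~\ref{prop:3.1}--\ref{prop:3.2}, with the Lagrangian-sufficiency and positivity ($H^{*}\in\mathbb{H}_\Phi$) checks you flag being carried out inside the paper's proofs of those two propositions. Nothing is missing; your write-up is essentially the paper's argument made explicit.
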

 
\section{Examples with explicit solution}\label{sec:examples}
In this section, we present two examples to illustrate our general results. In particular, we consider the optimization problem \eqref{prob:martingale} when the WVaR risk measure is given by either VaR or ES, two popular risk measures.

\subsection{Mean-VaR efficient portfolio}
In this subsection, we specialize our setting to the mean-VaR optimization problem. In particular, we consider the optimization problem \eqref{prob:martingale} when the WVaR risk measure is given by the VaR at a confidence level $0<\alpha<1$, namely
\begin{equation*}
\rho _{\Phi} (X)=\text{VaR}_{\alpha}(X)=-G_{X}(\alpha).
\end{equation*}
In other words, $\Phi$ is given by the Dirac measure at $\alpha$.

\begin{proposition}\label{prop:4.1}
When $\Phi$ is given by the Dirac measure at $\alpha\in (0,1)$, we have the following assertions.
	\begin{description}
\item[Case $\lambda=0$.] 
	\begin{enumerate}
		\item The minimum-VaR (min-VaR) terminal wealth is 
		\begin{equation*}
		X^{\text{VaR}}_{T,0}=
		\begin{cases}
		0, ~ & \xi_T > \xi_{\alpha},\\
		\underline{X}_{\text{VaR}} , ~ & \xi_T \le \xi_{\alpha},
		\end{cases}
		\end{equation*}
		where 
		\begin{equation*}
			\begin{aligned}
				\underline{X}_{\text{VaR}} &=\frac{x}{\E [\xi_T]} \cdot \frac{ 1}{ 1-w(\alpha)} ,\quad 
				\xi_{\alpha}=G_{\xi} (1-\alpha).
			\end{aligned}
		\end{equation*}
		
		\item The optimal log-return is 
		$$R^{\text{VaR}}_{T,0}=\frac{1}{T} \ln \frac{X^{\text{VaR}}_{T,0}}{x}.$$

		\item The expected optimal log-return is $\E [R^{\text{VaR}}_{T,0}]=-\infty.$
		
		\item The VaR of the optimal log-return is 
		$$\text{VaR}_{\alpha}(R^{\text{VaR}}_{T,0})=\frac{1}{T} \ln \frac{\underline{X}_{\text{VaR}}}{x}.$$

	\end{enumerate}

\item[Case $0<\lambda<\infty$.] 
	\begin{enumerate}
		\item The mean-VaR efficient terminal wealth is 
		\begin{equation*}
		X^{\text{VaR}}_{T,\lambda}=
		\begin{cases}
		\frac{\lambda}{1+\lambda} \cdot \frac{x}{\xi_T} , ~ & \xi_T > \xi_{\alpha},\\
		 \underline{X}_{\text{VaR}} , ~ & \underline{\xi}_{\text{VaR}} < \xi_T \le \xi_{\alpha} ,\\
		 \frac{\lambda}{1+\lambda} \cdot \frac{x}{\xi_T} , ~ & \xi_T \le \underline{\xi}_{\text{VaR}},
		\end{cases}
		\end{equation*}
		where 
		\begin{equation*}
		\begin{aligned}
		\underline{X}_{\text{VaR}} &=\frac{\lambda}{1+\lambda} \cdot \frac{x}{ \underline{\xi}_{\text{VaR}}} ,\\
		\xi_{\alpha} &=G_{\xi} (1-\alpha),\\
		\underline{\xi}_{\text{VaR}} &=G_{\xi} (1-w^{-1}(s^{*}(\lambda) )) ,
		\end{aligned}
		\end{equation*}
		and $s^{*}(\lambda)$ is given in Lemma \ref{lemma: f1}.

		\item The optimal log-return is 
		$$R^{\text{VaR}}_{T,\lambda}=\frac{1}{T} \ln \frac{X^{\text{VaR}}_{T,\lambda}}{x} .$$

		\item The VaR of the optimal log-return is 
		$$\text{VaR}_{\alpha}(R^{\text{VaR}}_{T,\lambda})=\frac{1}{T} \ln \frac{ \underline{X}_{\text{VaR}} }{x} .$$

	\end{enumerate}
\end{description}

\end{proposition}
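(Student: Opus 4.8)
The plan is to specialize the general efficient‑portfolio formula of Proposition \ref{prop:efficient} to the case in which $\Phi$ is the Dirac measure at $\alpha$, so that the only real work is to compute the convex envelope $\delta(\cdot;\lambda)$ and its right derivative explicitly. First I would substitute $\Phi([0,u])=\mathbf{1}_{\{u\ge\alpha\}}$ into the definitions of $\varphi(s;\lambda)$ and $\varphi(s-;\lambda)$. Since $w$ is strictly increasing, $w^{-1}(s)\ge\alpha\iff s\ge w(\alpha)$, so the left‑continuous version reduces to
\[
\varphi(s-;\lambda)=\frac{\mathbf{1}_{\{s>w(\alpha)\}}+\lambda\,w^{-1}(s)}{1+\lambda},
\]
namely the strictly convex smooth function $g_0(s):=\frac{\lambda}{1+\lambda}w^{-1}(s)$ plus an upward jump of height $\frac{1}{1+\lambda}$ located at the single point $c:=w(\alpha)$.

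Next I would determine $\delta(\cdot;\lambda)$, the largest convex function dominated by $\varphi(\cdot-;\lambda)$, as the lower convex hull of this graph. For $\lambda=0$ the function is the pure step $\mathbf{1}_{\{s>c\}}$, whose largest convex minorant is $0$ on $[0,c]$ and the chord $\frac{s-c}{1-c}$ on $[c,1]$; differentiating gives $\delta'(\cdot;0)=0$ on $[0,c)$ and $\frac{1}{1-c}$ on $(c,1)$. For $0<\lambda<\infty$, because $g_0$ is strictly convex there is no line tangent to both branches, so the envelope must follow $g_0$ on $[0,c]$, run along the unique line through $(c,g_0(c))$ tangent to the shifted branch $g_0+\frac{1}{1+\lambda}$ at a point $s^{*}(\lambda)>c$, and then coincide with $g_0+\frac{1}{1+\lambda}$ on $[s^{*}(\lambda),1]$. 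The tangency/chord condition pinning down $s^{*}(\lambda)$ is exactly the equation solved in Lemma \ref{lemma: f1}, which I would invoke for existence and uniqueness; I would separately check that the envelope cannot detach from $g_0$ before $c$ (any chord to a right‑branch point exceeds $g_0(s)$ there), that the corner at $c$ is convex (the slope jumps up from $g_0'(c)$ to $g_0'(s^{*}(\lambda))$), and that the candidate line is a genuine affine minorant. Hence $\delta'(\cdot;\lambda)$ equals $g_0'$ on $[0,c)\cup[s^{*}(\lambda),1)$ and the constant $g_0'(s^{*}(\lambda))$ on $[c,s^{*}(\lambda))$.

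I would then read off the wealth from $X^{*}_{T,\lambda}=\frac{x}{\E[\xi_T]}\,\delta'\!\left(w(1-F_\xi(\xi_T));\lambda\right)$. The key simplification uses $w^{-1}(w(1-F_\xi(\xi_T)))=1-F_\xi(\xi_T)$ together with $(w^{-1})'(s)=\E[\xi_T]/G_\xi(1-w^{-1}(s))$ and $G_\xi(F_\xi(\xi_T))=\xi_T$ (valid since $\xi_T$ is continuously distributed); these collapse $\frac{x}{\E[\xi_T]}g_0'(w(1-F_\xi(\xi_T)))$ to $\frac{\lambda}{1+\lambda}\frac{x}{\xi_T}$. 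Translating the three $s$‑regions back through $s\ge w(\alpha)\iff\xi_T\le\xi_\alpha$ and $s\ge s^{*}(\lambda)\iff\xi_T\le\underline{\xi}_{\text{VaR}}$ yields the stated three‑piece (resp. two‑piece for $\lambda=0$) formula, with $\underline{X}_{\text{VaR}}=\frac{\lambda}{1+\lambda}\frac{x}{\underline{\xi}_{\text{VaR}}}$ obtained by evaluating the constant middle slope.

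Finally, for the log‑return quantities I would use that $F_\xi(\xi_T)$ is uniform on $(0,1)$. For $\lambda=0$ the wealth vanishes on $\{\xi_T>\xi_\alpha\}$, an event of probability $\alpha>0$, so $R_T=-\infty$ there and $\E[R^{\text{VaR}}_{T,0}]=-\infty$, while the $\alpha$‑quantile of the optimal log‑return equals $\frac{1}{T}\ln(\underline{X}_{\text{VaR}}/x)$, from which its VaR follows by definition. For $0<\lambda<\infty$ the optimal wealth is non‑increasing in $\xi_T$ with an upward jump across $\xi_\alpha$, so $P(X<\underline{X}_{\text{VaR}})=P(\xi_T>\xi_\alpha)=\alpha$ while $P(X=\underline{X}_{\text{VaR}})>0$; hence the $\alpha$‑quantile of $X^{\text{VaR}}_{T,\lambda}$ is exactly $\underline{X}_{\text{VaR}}$, again giving the stated VaR. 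The main obstacle is the convex‑hull step: correctly identifying the structure when a strictly convex branch carries an upward jump — ruling out any doubly‑tangent chord and verifying that the single tangent line from the bottom of the jump is both convex‑compatible at $c$ and a valid minorant — which is where the auxiliary Lemma \ref{lemma: f1} does the heavy lifting.
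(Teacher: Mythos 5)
Your proposal is correct and follows essentially the same route as the paper: specialize $\varphi(s-;\lambda)$ to the Dirac measure, identify the convex envelope (the step-function case for $\lambda=0$ and the tangent-line construction pinned down by Lemma \ref{lemma: f1} for $0<\lambda<\infty$), and then substitute $\delta'$ into Proposition \ref{prop:efficient} using $(w^{-1})'(s)=\E[\xi_T]/G_\xi(1-w^{-1}(s))$ to translate the $s$-regions into $\xi_T$-regions. Your explicit verification of the expected log-return and the quantile/VaR computations fills in what the paper dismisses as ``straightforward to verify,'' but the underlying argument is the same.
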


Figure \ref{figure:VaR wealth1} depicts the optimal terminal payoff of the min-VaR portfolio ($\lambda=0$), which resembles a digital option. Essentially, the investor invests all the money in a digital option that pays $\underline{X}_{\text{VaR}}$ in the good states of the market ($\xi < \xi_{\alpha}$) and 0 otherwise. The probability of winning the option depends solely on the confidence level of VaR and is given by $\P (\xi \le \xi_{\alpha})=1-\alpha$. 

\begin{figure}[H]
	\centering
		\centering
		\includegraphics[width=0.6\textwidth]{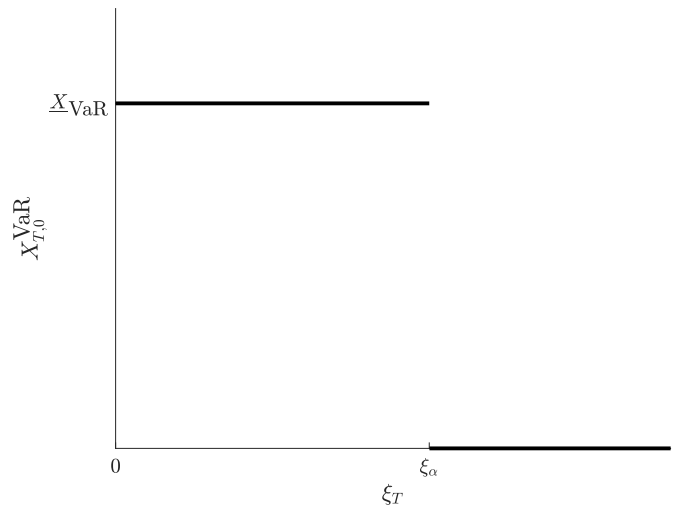}
	\caption{The Min-VaR Efficient Terminal Wealth ($\lambda=0$)}\label{figure:VaR wealth1}
\end{figure}

Figure \ref{figure:VaR wealth2} displays the optimal terminal payoff of the mean-VaR efficient portfolio ($0<\lambda<\infty$). The investor classifies market scenarios into three subsets: in the good states ($\xi \le \underline{\xi}_{\text{VaR}}$) and in the bad states ($\xi > \xi_{\alpha}$), the terminal payoff is a fraction ($\lambda/(1+\lambda)$) of the growth optimal portfolio; in the intermediate states ($\underline{\xi}_{\text{VaR}} < \xi \le \xi_{\alpha}$), the investor receives a constant payoff $\underline{X}_{\text{VaR}}$. Moreover, the terminal wealth has a jump discontinuity at $\xi=\xi_{\alpha}$ and the corresponding log-returns are always finite (but can be extremely large or small).

\begin{figure}[H]
		\centering
		\includegraphics[width=0.6\textwidth]{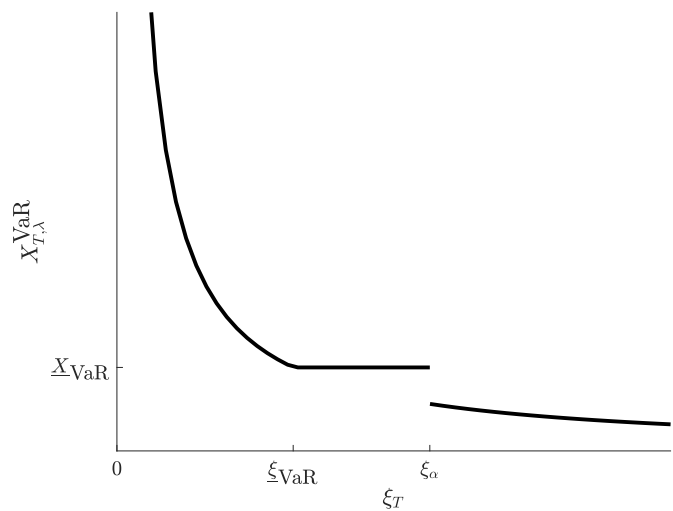}
	\caption{The Mean-VaR Efficient Terminal Wealth ($0<\lambda<\infty$)}\label{figure:VaR wealth2}
\end{figure}

Figure \ref{figure:VaR efficient} plots the mean-VaR efficient frontiers for different confidence levels $\alpha$. The efficient frontier is a concave curve that connects the growth optimal portfolio (colored dots) with the min-VaR portfolio (not shown in the graph). The growth optimal portfolio has the highest expected log-return but also the highest VaR. By contrast, the min-VaR portfolio has the smallest expected log-return (negative infinity) but also the lowest VaR. Figure \ref{figure:VaR efficient} also displays a sensitivity analysis of the efficient frontier with respect to $\alpha$, the confidence level of VaR. As $\alpha$ increases, the efficient frontier shifts to the left: for a given level of the expected log-return, the VaR of the corresponding efficient portfolio decreases as $\alpha$ increases.

\begin{figure}[H]
		\centering
		\includegraphics[width=0.6\textwidth]{./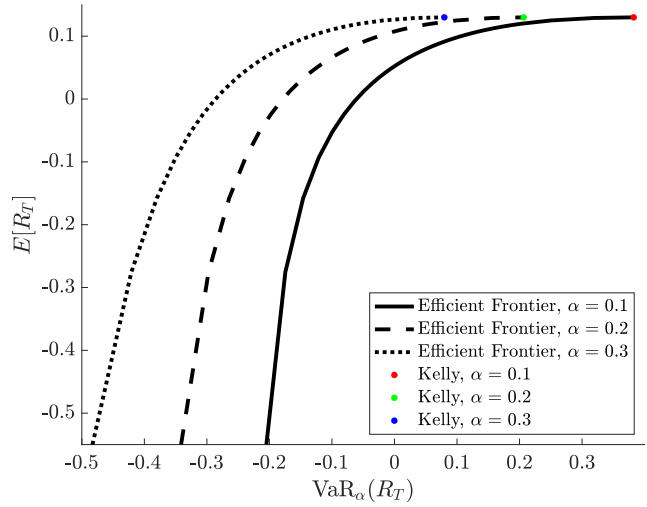}\\
	
	\caption{The Mean-VaR Efficient Frontier with $0<\lambda<\infty$, $T=1$, $r=0.05$, and $\theta=0.4$ }\label{figure:VaR efficient}
\end{figure}

As the optimal terminal wealth is known, we can solve for the optimal time-$t$ wealth and portfolio policy.

\begin{corollary}\label{coro:4.1}
We have the following assertions.

\begin{description}
\item[Case $\lambda=0$.] 
\begin{enumerate}
\item The min-VaR efficient wealth at time $t$ is 
\begin{equation*}
X^{\text{VaR}}_{t,0}=\underline{X}_{\text{VaR}} e^{-r (T-t)} N \left( d_2 ( t, \xi_t, \xi_{\alpha}) \right).
\end{equation*}

\item The optimal portfolio policy at time $t$ is 
\begin{equation*}
\pi^{\text{VaR}}_{t,0}=\frac{\underline{X}_{\text{VaR}} e^{-r (T-t)} \nu \left( d_2 (t, \xi_t, \xi_{\alpha}) \right)}{ \sigma \sqrt{T-t} } .
\end{equation*}
\end{enumerate}

\item[Case $0<\lambda<\infty$.] 
\begin{enumerate}
\item The mean-VaR efficient wealth at time $t$ is 
\begin{equation*}
\begin{aligned}
X^{\text{VaR}}_{t,\lambda} 
=&\frac{\lambda}{1+\lambda} \cdot \frac{x}{\xi_t} \left( N \left(-d_1 (t, \xi_t, \xi_{\alpha}) \right)+N \left( d_1 (t, \xi_t, \underline{\xi}_{\text{VaR}} ) \right) \right) \\
&+\underline{X}_{\text{VaR}} e^{-r(T-t)} \left( N \left( d_2 (t, \xi_t, \xi_{\alpha}) \right)-N \left( d_2 (t, \xi_t, \underline{\xi}_{\text{VaR}} ) \right) \right).
\end{aligned}
\end{equation*}

\item The optimal portfolio policy at time $t$ is 
\begin{equation*}
\begin{aligned}
\pi^{\text{VaR}}_{t,\lambda}=& \frac{\lambda}{1+\lambda} \cdot \frac{x}{\xi_t} \cdot \left( N\left(-d_1(t, \xi_t, \xi_{\alpha}) \right)+N\left(d_1(t, \xi_t, \underline{\xi}_{\text{VaR}}) \right) \right) \frac{\theta}{\sigma} \\
&+\frac{ e^{-r (T-t)} \nu \left( d_2 (t, \xi_t, \xi_{\alpha}) \right) }{\sigma \sqrt{T-t} } \cdot \left( \underline{X}_{\text{VaR}}-\frac{\lambda}{1+\lambda} \cdot \frac{x}{\xi_{\alpha}} \right).
\end{aligned}
\end{equation*}
\end{enumerate}
\end{description}
Here and hereafter 
\begin{equation*}
\begin{aligned}
d_1 (t, \xi_t, y) &=\frac{\ln \frac{y}{\xi _t}+(r+\frac{\theta ^2 }{2} )(T-t) }{\theta \sqrt{T-t}}, \\
d_2 (t, \xi_t, y) &=d_1 (t, \xi_t, y)-\theta \sqrt{T-t},
\end{aligned}
\end{equation*}
and $N (\cdot)$ is the standard normal distribution function, and $\nu (\cdot)$ is the standard normal probability density function. 
\end{corollary}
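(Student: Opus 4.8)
The plan is to combine the complete-market pricing formula with an Itô/delta-matching argument. Since the optimal terminal wealth $X^{\text{VaR}}_{T,\lambda}$ is already known explicitly from Proposition \ref{prop:4.1}, the time-$t$ wealth is the no-arbitrage price $X^{\text{VaR}}_{t,\lambda}=\xi_t^{-1}\E\big[\xi_T X^{\text{VaR}}_{T,\lambda}\mid\mathcal{F}_t\big]$, and the portfolio is recovered by matching diffusion coefficients. First I would record two building blocks. Writing $\xi_T=\xi_t\cdot(\xi_T/\xi_t)$, where $\xi_T/\xi_t=\exp(-(r+\theta^2/2)(T-t)-\theta(W_T-W_t))$ is log-normal and independent of $\mathcal{F}_t$, a direct Gaussian integration gives, for any threshold $y$,
\begin{equation*}
\xi_t^{-1}\E\big[\xi_T\mathbf{1}_{\{\xi_T\le y\}}\mid\mathcal{F}_t\big]=e^{-r(T-t)}N\big(d_2(t,\xi_t,y)\big),\qquad \P\big(\xi_T\le y\mid\mathcal{F}_t\big)=N\big(d_1(t,\xi_t,y)\big),
\end{equation*}
the first identity following from completing the square (the exponential prefactor collapses to $e^{-r(T-t)}$ exactly as in $\E[\xi_T/\xi_t\mid\mathcal{F}_t]=e^{-r(T-t)}$).

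For the wealth I would substitute the piecewise form of $X^{\text{VaR}}_{T,\lambda}$ into the pricing formula. On the two Kelly regions $\{\xi_T>\xi_\alpha\}$ and $\{\xi_T\le\underline{\xi}_{\text{VaR}}\}$ the integrand $\xi_T X^{\text{VaR}}_{T,\lambda}=\tfrac{\lambda}{1+\lambda}x$ is constant, so these contribute $\tfrac{\lambda}{1+\lambda}\tfrac{x}{\xi_t}\big(N(-d_1(t,\xi_t,\xi_\alpha))+N(d_1(t,\xi_t,\underline{\xi}_{\text{VaR}}))\big)$ by the second building block; on the constant-payoff middle region $\{\underline{\xi}_{\text{VaR}}<\xi_T\le\xi_\alpha\}$ the factor $\underline{X}_{\text{VaR}}$ pulls out and the first building block yields $\underline{X}_{\text{VaR}}e^{-r(T-t)}\big(N(d_2(t,\xi_t,\xi_\alpha))-N(d_2(t,\xi_t,\underline{\xi}_{\text{VaR}}))\big)$. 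Summing gives the stated formula for $X^{\text{VaR}}_{t,\lambda}$; the case $\lambda=0$ is just the single middle-region term with $\underline{\xi}_{\text{VaR}}$ absent.

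For the portfolio I would write $X^{\text{VaR}}_{t,\lambda}=g(t,\xi_t)$ and apply Itô together with $d\xi_t=-r\xi_t\,dt-\theta\xi_t\,dW_t$; matching the $dW_t$ terms against \eqref{eq:budget} gives the clean relation $\pi^{\text{VaR}}_{t,\lambda}=-\tfrac{\theta\xi_t}{\sigma}\,\partial_{\xi_t}g(t,\xi_t)$. Differentiating $g$ produces two kinds of terms: the derivative of the explicit $x/\xi_t$ prefactor reproduces, after multiplication by $-\theta\xi_t/\sigma$, the first term $\tfrac{\lambda}{1+\lambda}\tfrac{x}{\xi_t}\big(N(-d_1(t,\xi_t,\xi_\alpha))+N(d_1(t,\xi_t,\underline{\xi}_{\text{VaR}}))\big)\tfrac{\theta}{\sigma}$; the remaining terms are Gaussian-density ($\nu$) terms arising from differentiating the $N(\cdot)$'s through their $\xi_t$-dependence.

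The main obstacle is organizing these $\nu$-terms. Here the key algebraic fact is the Black--Scholes-type identity $\tfrac{x}{\xi_t}\nu(d_1(t,\xi_t,y))=\tfrac{x}{y}e^{-r(T-t)}\nu(d_2(t,\xi_t,y))$, which I would verify by computing $d_1^2-d_2^2=2\ln(y/\xi_t)+2r(T-t)$. Applied at $y=\underline{\xi}_{\text{VaR}}$, this identity together with the defining relation $\underline{X}_{\text{VaR}}=\tfrac{\lambda}{1+\lambda}\tfrac{x}{\underline{\xi}_{\text{VaR}}}$ shows that the density terms at the lower threshold cancel exactly -- the analytic reflection of the fact that $X^{\text{VaR}}_{T,\lambda}$ is continuous at $\underline{\xi}_{\text{VaR}}$. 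Applied at $y=\xi_\alpha$, the same identity converts the surviving $\nu(d_1(t,\xi_t,\xi_\alpha))$ term into a $\nu(d_2(t,\xi_t,\xi_\alpha))$ term with coefficient $\tfrac{\lambda}{1+\lambda}\tfrac{x}{\xi_\alpha}$, so the leftover density contribution is proportional to the jump size $\underline{X}_{\text{VaR}}-\tfrac{\lambda}{1+\lambda}\tfrac{x}{\xi_\alpha}$ of the terminal wealth at $\xi_\alpha$. This produces the second term of $\pi^{\text{VaR}}_{t,\lambda}$, and the $\lambda=0$ formulas follow from the same computation with only the middle region present (where the payoff jumps from $\underline{X}_{\text{VaR}}$ to $0$).
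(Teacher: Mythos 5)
Your proposal is correct and follows essentially the same route as the paper's proof: price the known terminal payoff via $X_t=\xi_t^{-1}\E[\xi_T X_T\mid\mathcal{F}_t]$ using the conditional log-normality of $\xi_T$, recover $\pi_t$ by matching the $dW_t$ coefficient from It\^o's lemma against \eqref{eq:budget}, and simplify with the identity $\nu(d_1(t,\xi_t,y))=e^{-r(T-t)}\tfrac{\xi_t}{y}\nu(d_2(t,\xi_t,y))$. Your observation that the $\nu$-terms cancel at $\underline{\xi}_{\text{VaR}}$ (continuity of the payoff) and survive at $\xi_\alpha$ in proportion to the jump size is a nice interpretive gloss on exactly the algebra the paper performs.
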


\subsection{Mean-ES efficient portfolio}

In this subsection, we specialize our setting to the mean-ES optimization problem. In particular, we consider the optimization problem \eqref{prob:martingale} when the WVaR risk measure is given by the ES at a confidence level $0<\alpha<1$, namely 
\begin{equation*}
\rho _{\Phi} (X)=\text{ES}_{\alpha} (X)=-\frac{1}{\alpha} \int_0^{\alpha} G_{X} (z)dz.
\end{equation*}
In other words, $\Phi$ admits a density $\phi (z)=\frac{1}{\alpha}\mathbf{1}_{z \le \alpha}$, for all $z \in [0,1]$.

\begin{proposition}\label{prop:4.2}
When $\Phi$ admits a density $\phi (z)=\frac{1}{\alpha}\mathbf{1}_{z \le \alpha}$ with $0<\alpha< 1$, we have the following assertions.
	
\begin{description}
\item[Case $\lambda=0$.] 	\begin{enumerate}
		\item The min-ES efficient terminal wealth is 
		\begin{equation*}
		X^{\text{ES}}_{T,0}=
		\left \{
		\begin{aligned}
		& \frac{x}{\alpha \xi_T} , ~ & \xi_T > \overline{\xi}_{\text{ES}},\\
		& \underline{X}_{\text{ES}}, ~ & \xi_T \le \overline{\xi}_{\text{ES}},
		\end{aligned}
		\right.
		\end{equation*}
		where 
		\begin{equation*}
		\begin{aligned}
		\underline{X}_{\text{ES}} &=\frac{x}{\alpha \overline{\xi}_{\text{ES}} },\\
		\overline{\xi}_{\text{ES}} &=G_{\xi} (1-w^{-1}(t_0)),
		\end{aligned}
		\end{equation*}
		and $t_0$ is given in Lemma \ref{lemma: f2}. 
		
		\item The optimal log-return is 
		$$R^{\text{ES}}_{T,0}=\frac{1}{T} \ln \frac{X^{\text{ES}}_{T,0}}{x}.$$
		\item The ES of the optimal log-return is 
\begin{align*}
\text{ES}_{\alpha}(R^{\text{ES}}_{T,0})&=\frac{1}{\alpha T} \left[ \ln \alpha \cdot N \left( - \frac{\ln \xi_{\alpha} + \left( r + \frac{\theta^2}{2} \right)T}{ \theta \sqrt{T}} \right) \right. \\
&\quad\;+ \ln \overline{\xi}_{\text{ES}} \cdot \left( N \left( \frac{\ln \overline{\xi}_{\text{ES}} + \left( r + \frac{\theta^2}{2} \right)T}{ \theta \sqrt{T}} \right) - N \left( \frac{\ln \xi_{\alpha} + \left( r + \frac{\theta^2}{2} \right)T}{ \theta \sqrt{T}} \right) \right) \\
&\quad\;\left. + \frac{\theta \sqrt{T}}{\sqrt{2 \pi}} e^{ - \frac{\left( \ln \overline{\xi}_{\text{ES}} + \left( r + \frac{\theta^2}{2} \right)T \right)^2}{2 \theta^2 T}} - \left( r + \frac{\theta^2}{2} \right) T N \left( - \frac{\ln \overline{\xi}_{\text{ES}} + \left( r + \frac{\theta^2}{2} \right)T}{ \theta \sqrt{T}} \right) \right] ,
\end{align*}
where $\xi_{\alpha} =G_{\xi} (1-\alpha).$
	\end{enumerate}
	
\item[Case $0<\lambda<\infty$.] 
	
	\begin{enumerate}
		\item The mean-ES efficient terminal wealth is 
		\begin{equation*}
		X^{\text{ES}}_{T,\lambda}=
		\begin{cases}
		 \frac{\frac{1}{\alpha}+\lambda}{1+\lambda} \cdot \frac{x}{\xi_T} , ~ & \xi_T > 	\overline{\xi}_{\text{ES}},\\
		 \underline{X}_{\text{ES}} , ~ & \underline{\xi}_{\text{ES}} < \xi_T \le 	\overline{\xi}_{\text{ES}} ,\\
		\frac{\lambda}{1+\lambda} \cdot \frac{x}{\xi_T} , ~ & \xi_T \le \underline{\xi}_{\text{ES}}, 
		\end{cases}
		\end{equation*}
		where 
		\begin{equation*}
		\begin{aligned}
		\underline{X}_{\text{ES}} &=\frac{\frac{1}{\alpha}+\lambda}{1+\lambda} \cdot \frac{x}{\overline{\xi}_{\text{ES}} }=\frac{\lambda}{1+\lambda} \cdot \frac{x}{\underline{\xi}_{\text{ES}}},\\
		\overline{\xi}_{\text{ES}} &=G_{\xi} (1-w^{-1}( t_1(\lambda) )),\\
		\underline{\xi}_{\text{ES}} &=\frac{\lambda}{\frac{1}{\alpha}+\alpha} \overline{\xi}_{\text{ES}},
		\end{aligned}
		\end{equation*}
		and $t_1 (\lambda)$ is given in Lemma \ref{lemma: f3}.

		\item The optimal log-return is 
		$$R^{\text{ES}}_{T,\lambda}=\frac{1}{T} \ln \frac{X^{\text{ES}}_{T,\lambda}}{x} .$$
		
		\item The ES of the optimal log-return is 
\begin{align*}
\text{ES}_{\alpha}(R^{\text{ES}}_{T,\lambda})&=\frac{1}{\alpha T} \left[ \ln \left( \frac{1+\lambda}{\frac{1}{\alpha} + \lambda} \right) \cdot N \left( - \frac{\ln \xi_{\alpha} + \left( r + \frac{\theta^2}{2} \right)T}{ \theta \sqrt{T}} \right) \right. \\
&\quad\;+ \ln \overline{\xi}_{\text{ES}} \cdot \left( N \left( \frac{\ln \overline{\xi}_{\text{ES}} + \left( r + \frac{\theta^2}{2} \right)T}{ \theta \sqrt{T}} \right) - N \left( \frac{\ln \xi_{\alpha} + \left( r + \frac{\theta^2}{2} \right)T}{ \theta \sqrt{T}} \right) \right) \\
&\quad\;\left. + \frac{\theta \sqrt{T}}{\sqrt{2 \pi}} e^{ - \frac{\left( \ln \overline{\xi}_{\text{ES}} + \left( r + \frac{\theta^2}{2} \right)T \right)^2}{2 \theta^2 T}} - \left( r + \frac{\theta^2}{2} \right) T N \left( - \frac{\ln \overline{\xi}_{\text{ES}} + \left( r + \frac{\theta^2}{2} \right)T}{ \theta \sqrt{T}} \right) \right], 
\end{align*}
where $\xi_{\alpha} =G_{\xi} (1-\alpha).$
	\end{enumerate}
\end{description}
\end{proposition}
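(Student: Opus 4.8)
The plan is to specialize the general efficient-portfolio formula of Proposition~\ref{prop:efficient} to the ES weighting, so that the whole task reduces to computing the convex envelope $\delta(\cdot;\lambda)$ of $\varphi(\cdot-;\lambda)$ explicitly and then reading off $\delta'$. For $\Phi(dz)=\frac{1}{\alpha}\mathbf{1}_{z\le\alpha}\,dz$ one has $\Phi([0,y])=\min(y/\alpha,1)$, so writing $s_\alpha=w(\alpha)$,
\[
\varphi(s;\lambda)=\frac{1}{1+\lambda}\begin{cases}\left(\tfrac1\alpha+\lambda\right)w^{-1}(s),& 0\le s\le s_\alpha,\\ 1+\lambda\,w^{-1}(s),& s_\alpha<s\le1,\end{cases}
\]
and, since $\Phi$ has no atoms, $\varphi(s-;\lambda)=\varphi(s;\lambda)$. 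Because $w^{-1}$ is $C^\infty$, increasing and strictly convex on $(0,1)$, $\varphi$ is convex and increasing on each piece; at $s_\alpha$ it is continuous but its right slope $\frac{\lambda}{1+\lambda}(w^{-1})'(s_\alpha)$ is strictly below its left slope $\frac{\frac1\alpha+\lambda}{1+\lambda}(w^{-1})'(s_\alpha)$. Hence $\varphi$ has a single concave kink, at $s_\alpha$, and its convex envelope is obtained by replacing $\varphi$ on an interval $[a,b]$ with $a<s_\alpha<b$ by the unique bi-tangent line segment.

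The endpoints $a,b$ are fixed by the bi-tangency conditions: equal branch slopes, $\frac{\frac1\alpha+\lambda}{1+\lambda}(w^{-1})'(a)=\frac{\lambda}{1+\lambda}(w^{-1})'(b)$, together with the requirement that this common slope equal the secant slope $\frac{\varphi(b)-\varphi(a)}{b-a}$. Existence, uniqueness and the identification $a=t_1(\lambda)$ follow from the monotonicity properties recorded in Lemma~\ref{lemma: f3}. Differentiating the envelope then gives
\[
\delta'(s;\lambda)=\begin{cases}\frac{\frac1\alpha+\lambda}{1+\lambda}(w^{-1})'(s),& 0\le s<a,\\ \frac{\lambda}{1+\lambda}(w^{-1})'(b),& a\le s\le b,\\ \frac{\lambda}{1+\lambda}(w^{-1})'(s),& b<s\le1,\end{cases}
\]
the middle value being the common tangent slope (constant in $s$).

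Next I would substitute $s=w(1-F_\xi(\xi_T))$ into $X^{*}_{T,\lambda}=\frac{x}{\E[\xi_T]}\delta'(s;\lambda)$ from Proposition~\ref{prop:efficient}, using the key identity $(w^{-1})'\bigl(w(1-F_\xi(\xi_T))\bigr)=\E[\xi_T]/\xi_T$, which follows from $w'(s)=G_\xi(1-s)/\E[\xi_T]$ and $G_\xi(F_\xi(\xi_T))=\xi_T$ a.s. On the first branch this gives $\frac{\frac1\alpha+\lambda}{1+\lambda}\frac{x}{\xi_T}$, on the middle branch the constant $\underline{X}_{\text{ES}}$, and on the last branch $\frac{\lambda}{1+\lambda}\frac{x}{\xi_T}$; the thresholds $\overline{\xi}_{\text{ES}}=G_\xi(1-w^{-1}(a))$ and $\underline{\xi}_{\text{ES}}=G_\xi(1-w^{-1}(b))$ arise from $s=a,b$, while evaluating the constant from either adjacent branch yields the two equivalent expressions for $\underline{X}_{\text{ES}}$ and the relation between $\underline{\xi}_{\text{ES}}$ and $\overline{\xi}_{\text{ES}}$. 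The case $\lambda=0$ is the degenerate limit in which the right branch of $\varphi$ is flat, so the right tangent point merges with $s=1$ and a single tangent point $t_0$ (from Lemma~\ref{lemma: f2}) remains; the log-return formulas are then immediate from $R=\frac1T\ln(X_T/x)$.

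For the ES of the optimal log-return I would use $\text{ES}_\alpha(R)=-\frac1\alpha\int_0^\alpha G_R(z)\,dz=-\frac1\alpha\E[R\,\mathbf{1}_{\xi_T\ge\xi_\alpha}]$, valid since $R$ is a nonincreasing function of $\xi_T$ and $\P(\xi_T\ge\xi_\alpha)=\alpha$ with $\xi_\alpha=G_\xi(1-\alpha)$. The crucial observation is the ordering $\underline{\xi}_{\text{ES}}<\xi_\alpha<\overline{\xi}_{\text{ES}}$, forced by $a<s_\alpha<b$, so that the averaging region $\{\xi_T\ge\xi_\alpha\}$ splits into $\{\xi_T>\overline{\xi}_{\text{ES}}\}$, where $R=\frac1T(\ln\frac{\frac1\alpha+\lambda}{1+\lambda}-\ln\xi_T)$, and the constant branch $\{\xi_\alpha\le\xi_T\le\overline{\xi}_{\text{ES}}\}$, where $R=\frac1T\ln(\underline{X}_{\text{ES}}/x)$. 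Since $\ln\xi_T\sim N(-(r+\frac{\theta^2}{2})T,\theta^2T)$, the two resulting expectations are truncated-lognormal moments producing exactly the $N(\cdot)$ terms and the Gaussian-density term of the stated formula, the probability multiplying $\ln\frac{1+\lambda}{\frac1\alpha+\lambda}$ being $\P(\xi_T\ge\xi_\alpha)=\alpha$. The main obstacle is the first step: proving rigorously that the convex envelope has precisely this bi-tangent form, i.e. that $\varphi$'s only failure of convexity is the lone kink at $s_\alpha$ and that $a,b$ exist, are unique, and bracket $s_\alpha$. This is exactly what the strict convexity of $w^{-1}$ together with the monotonicity arguments of Lemmas~\ref{lemma: f2}--\ref{lemma: f3} deliver; once the envelope is secured, the remainder is substitution and routine lognormal integration.
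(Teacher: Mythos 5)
Your proposal is correct and takes essentially the same route as the paper: the paper's own proof just invokes Proposition \ref{prop:efficient} with the convex envelopes from Lemmas \ref{lemma: f2} and \ref{lemma: f3} (mirroring the proof of Proposition \ref{prop:4.1}) and then evaluates truncated lognormal integrals, which is precisely your plan, with your verification of the ordering $\underline{\xi}_{\text{ES}}<\xi_{\alpha}<\overline{\xi}_{\text{ES}}$ usefully filling in the step the paper dismisses as ``straightforward.'' One incidental remark: your tangency conditions yield $\underline{\xi}_{\text{ES}}=\frac{\lambda}{\frac{1}{\alpha}+\lambda}\,\overline{\xi}_{\text{ES}}$, which agrees with the construction in Lemma \ref{lemma: f3} and indicates that the denominator $\frac{1}{\alpha}+\alpha$ in the proposition's statement is a typo.
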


Figure \ref{figure:ES wealth1} depicts the optimal terminal payoff of the min-ES portfolio ($\lambda=0$). The investor classifies market scenarios into two subsets: in the good states ($\xi \le \overline{\xi}_{\text{ES}}$), the investor receives a constant payoff $\underline{X}_{\text{ES}}$; in the bad states ($\xi > \overline{\xi}_{\text{ES}}$), the payoff is a multiple ($1/\alpha$) of the growth optimal portfolio. 

\begin{figure}[H]
	\centering
		\centering
		\includegraphics[width=0.6\textwidth]{./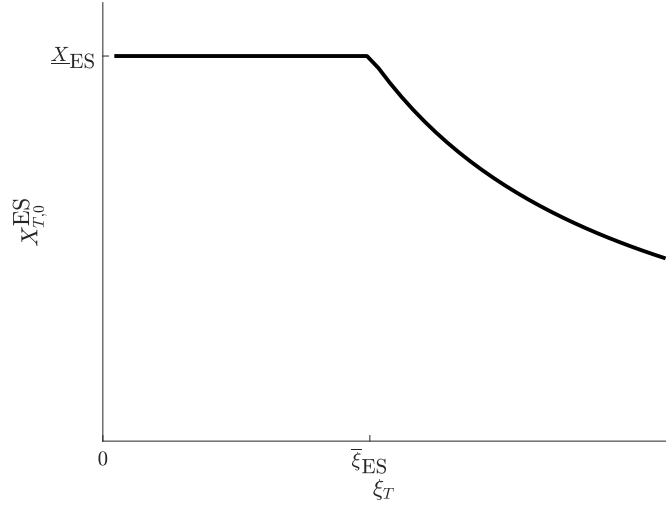}
	\caption{The Min-ES Efficient Terminal Wealth ($\lambda=0$)}\label{figure:ES wealth1}
\end{figure}

Figure \ref{figure:ES wealth2} displays the optimal terminal payoff of the mean-ES efficient portfolio ($0<\lambda<\infty$). The investor classifies market scenarios into three subsets: in the good states ($\xi \le \underline{\xi}_{\text{ES}}$), the terminal payoff is a fraction ($\lambda/(1+\lambda)$) of the growth optimal portfolio; in the intermediate states ($\underline{\xi}_{\text{ES}} < \xi \le \overline{\xi}_{\text{ES}}$), the investor receives a constant payoff $\underline{X}_{\text{ES}}$; in the bad states ($\xi > \overline{\xi}_{\text{ES}}$), the terminal payoff is a multiple ($(\frac{1}{\alpha}+\lambda)/(1+\lambda)$) of the growth optimal portfolio. In contrast to the mean-VaR efficient portfolio, the terminal payoff of the mean-ES efficient portfolio is continuous in the state price density.

\begin{figure}[H]
	\centering
		\centering
		\includegraphics[width=0.6\textwidth]{./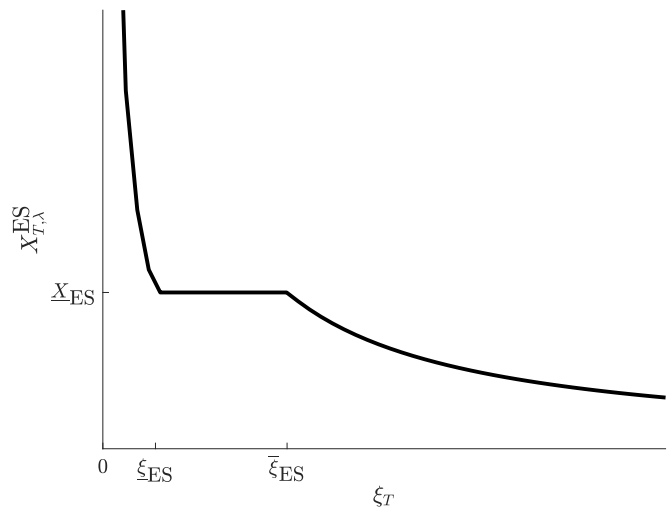}
	\caption{The Mean-ES Efficient Terminal Wealth ($0<\lambda<\infty$)}\label{figure:ES wealth2}
\end{figure}

Figure \ref{figure:ES efficient} shows the mean-ES efficient frontiers for different confidence levels $\alpha$. The efficient frontier is a concave curve that connects the growth optimal portfolio (colored dots) with the min-ES portfolio (colored crosses). The growth optimal portfolio has the highest expected log-return but also the highest ES. By contrast, the min-ES portfolio has the smallest expected log-return but also the lowest ES. In contrast to the min-VaR portfolio, the risk of the min-ES portfolio is finite and thus the mean-ES efficient frontier is a finite curve. Figure \ref{figure:ES efficient} also displays a sensitivity analysis of the efficient frontier with respect to $\alpha$, the confidence level of ES. As $\alpha$ increases, the efficient frontier shifts to the left: for a given level of the expected log-return, the ES of the corresponding efficient portfolio decreases as $\alpha$ increases. In particular, the minimum ES that the investor can achieve is decreasing in $\alpha$.

\begin{figure}[H]
	\centering
	\includegraphics[width=0.6\textwidth]{./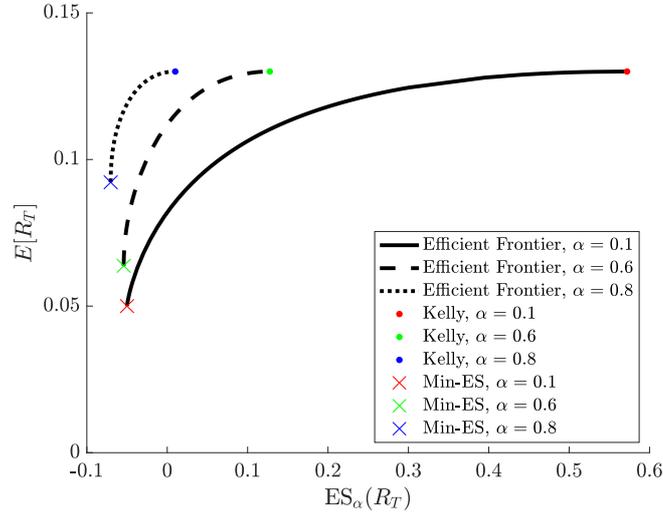}
	\caption{The Mean-ES Efficient Frontier with $0<\lambda<\infty$, $T=1$, $r=0.05$, and $\theta=0.4$. }\label{figure:ES efficient}
\end{figure}

The following corollary presents the optimal time-$t$ wealth and portfolio policy. The proof is similar to that of Corollary \ref{coro:4.1} and thus we omit it.

\begin{corollary}
We have the following assertions.

\begin{description}
\item[Case $\lambda=0$.] \begin{enumerate}
\item The min-ES efficient wealth at time $t$ is 
\begin{equation*}
X^{\text{ES}}_{t,0}=\frac{x}{\alpha \xi_t} N \left(-d_1 (t, \xi_t, \overline{\xi}_{\text{ES}} ) \right)+\underline{X}_{\text{ES}} e^{-r (T-t)} N \left( d_2 (t, \xi_t, \overline{\xi}_{\text{ES}} ) \right).
\end{equation*}

\item The efficient portfolio policy at time $t$ is 
\begin{equation*}
\pi^{\text{ES}}_{t,0}=\frac{x}{\alpha \xi_t} N \left(-d_1 (t, \xi_t, \overline{\xi}_{\text{ES}} ) \right) \frac{\theta}{\sigma} .
\end{equation*}
\end{enumerate}

\item[Case $0<\lambda<\infty$.] 
\begin{enumerate}
\item The mean-ES efficient wealth at time $t$ is 
\begin{align*}
X^{\text{ES}}_{t,\lambda} 
&= \frac{\frac{1}{\alpha}+\lambda}{1+\lambda} \cdot \frac{x}{\xi_t} N \left(-d_1 ( t, \xi_t, \overline{\xi}_{\text{ES}} ) \right)\\
&\quad\;+\underline{X}_{\text{ES}} e^{-r(T-t)} \left( N \left( d_2 (t, \xi_t, \overline{\xi}_{\text{ES}} ) \right)-N \left( d_2 (t, \xi_t, \underline{\xi}_{\text{ES}} ) \right) \right)\\
&\quad\;+\frac{\lambda}{1+\lambda} \cdot \frac{x}{\xi_t} N \left( d_1 (t, \xi_t, \underline{\xi}_{\text{ES}} ) \right). 
\end{align*}

\item The efficient portfolio policy at time $t$ is 
\begin{align*}
\pi^{\text{ES}}_{t,\lambda}& =\frac{x}{\xi_t} \cdot \frac{\theta}{\sigma} \left(\frac{\frac{1}{\alpha}+\lambda}{1+\lambda} N \left(-d_1 (t, \xi_t, \overline{\xi}_{\text{ES}} ) \right)+\frac{\lambda}{1+\lambda} N \left( d_1 (t, \xi_t, \underline{\xi}_{\text{ES}} ) \right) \right). 
\end{align*}
\end{enumerate}
\end{description}

\end{corollary}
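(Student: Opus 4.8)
The plan is to exploit market completeness through the state-price-density representation and then read off the portfolio by matching diffusion coefficients, taking the optimal terminal wealth $X^{\text{ES}}_{T,\lambda}$ from Proposition \ref{prop:4.2} as given. Since the market is complete, the time-$t$ value of the optimal terminal wealth is
$$X^{\text{ES}}_{t,\lambda}=\frac{1}{\xi_t}\,\E\big[\xi_T\,X^{\text{ES}}_{T,\lambda}\,\big|\,\mathcal{F}_t\big].$$
From \eqref{eq:xi} one has $\xi_T=\xi_t\exp\!\big(-(r+\tfrac{\theta^2}{2})(T-t)-\theta(W_T-W_t)\big)$, so conditional on $\mathcal{F}_t$ the ratio $\xi_T/\xi_t$ is log-normal; each event $\{\xi_T>y\}$ is equivalent to a one-sided event on the standard Gaussian $(W_T-W_t)/\sqrt{T-t}$, and a direct computation gives $\P(\xi_T>y\mid\mathcal{F}_t)=N(-d_1(t,\xi_t,y))$.

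First I would substitute the three-piece form of $X^{\text{ES}}_{T,\lambda}$ into the expectation. On the two outer regions the payoff is proportional to $1/\xi_T$, so $\xi_T X^{\text{ES}}_{T,\lambda}$ is constant there and the contribution reduces to a constant times a probability, producing the terms $\tfrac{\frac{1}{\alpha}+\lambda}{1+\lambda}\tfrac{x}{\xi_t}N(-d_1(\cdot,\overline{\xi}_{\text{ES}}))$ and $\tfrac{\lambda}{1+\lambda}\tfrac{x}{\xi_t}N(d_1(\cdot,\underline{\xi}_{\text{ES}}))$. On the middle region the payoff is the constant $\underline{X}_{\text{ES}}$, so I must evaluate $\tfrac{1}{\xi_t}\E[\xi_T\mathbf{1}_{\underline{\xi}_{\text{ES}}<\xi_T\le\overline{\xi}_{\text{ES}}}\mid\mathcal{F}_t]$; completing the square in the Gaussian integral converts the $\xi_T$-weight into the discount factor $e^{-r(T-t)}$ and shifts the Gaussian argument by $\theta\sqrt{T-t}$, turning $d_1$ into $d_2$ and yielding $\underline{X}_{\text{ES}}e^{-r(T-t)}\big(N(d_2(\cdot,\overline{\xi}_{\text{ES}}))-N(d_2(\cdot,\underline{\xi}_{\text{ES}}))\big)$. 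Summing the three contributions gives the stated wealth formula; setting $\lambda=0$ collapses the inner growth-optimal region and recovers the two-term $\lambda=0$ expression.

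For the portfolio policy I would write $X^{\text{ES}}_{t,\lambda}=v(t,\xi_t)$ and apply It\^o's formula; comparing the diffusion term $-\theta\xi_t v_\xi(t,\xi_t)\,dW_t$ with the $\sigma\pi_t\,dW_t$ term in \eqref{eq:budget} gives $\pi^{\text{ES}}_{t,\lambda}=-\tfrac{\theta}{\sigma}\xi_t v_\xi(t,\xi_t)$. Differentiating $v$ in $\xi_t$ produces, besides the ``level'' terms from differentiating the explicit $1/\xi_t$ factors in the two growth-optimal pieces (which, after multiplying by $-\tfrac{\theta}{\sigma}\xi_t$, give the claimed $N(-d_1)$ and $N(d_1)$ contributions), several density ($\nu$) terms arising from differentiating the Gaussian arguments $d_1,d_2$ at the thresholds $\overline{\xi}_{\text{ES}},\underline{\xi}_{\text{ES}}$.

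The main obstacle — and the one genuinely worth checking — is that all these density terms cancel, so that the policy carries no $\nu$ contribution, in contrast to the mean-VaR policy of Corollary \ref{coro:4.1}, which retains an explicit $\nu(d_2)$ term from the jump of the VaR payoff. The cancellation rests on two facts. At a fixed threshold $y$ the Gaussian arguments satisfy $d_1-d_2=\theta\sqrt{T-t}$, which forces the Black–Scholes-type identity $\nu(d_1(t,\xi_t,y))/\nu(d_2(t,\xi_t,y))=\tfrac{\xi_t}{y}e^{-r(T-t)}$. Combined with the continuity of the ES payoff at both boundaries — namely $\underline{X}_{\text{ES}}=\tfrac{\frac{1}{\alpha}+\lambda}{1+\lambda}\tfrac{x}{\overline{\xi}_{\text{ES}}}=\tfrac{\lambda}{1+\lambda}\tfrac{x}{\underline{\xi}_{\text{ES}}}$, which is exactly how $\overline{\xi}_{\text{ES}}$ and $\underline{\xi}_{\text{ES}}$ are defined in Proposition \ref{prop:4.2} — the density terms from the two sides of each threshold are equal and opposite and therefore vanish. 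What survives is precisely the stated expression, and the $\lambda=0$ case follows by the same cancellation at the single threshold $\overline{\xi}_{\text{ES}}$.
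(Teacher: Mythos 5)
Your proposal is correct and takes essentially the same approach as the paper: the paper omits this proof as ``similar to that of Corollary \ref{coro:4.1}'', whose argument is exactly yours---conditional valuation $X_t=\xi_t^{-1}\E[\xi_T X_T\mid\mathcal{F}_t]$ via Gaussian computations, then It\^o's lemma with matching of the $dW_t$ coefficient against \eqref{eq:budget}, using the identity $\nu\left(d_1(t,\xi_t,y)\right)=e^{-r(T-t)}\frac{\xi_t}{y}\nu\left(d_2(t,\xi_t,y)\right)$. Your explicit check that payoff continuity at $\overline{\xi}_{\text{ES}}$ and $\underline{\xi}_{\text{ES}}$ makes all the $\nu$ terms cancel (unlike the VaR case, where the jump leaves a $\nu(d_2)$ term) is precisely the detail the omitted proof relies on.
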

 
\subsection{Comparison with \cite{he2015dynamic}}
\cite{he2015dynamic} consider a continuous-time mean-risk portfolio choice problem in which the risk is measured by WVaR. They assume the decision-maker minimizes the risk of terminal wealth, while maintaining the expected terminal wealth above a prescribed target. They find that the model can lead to extreme risk-taking behaviors. When bankruptcy is allowed, the optimal terminal wealth is binary, i.e., the investor invests a small amount of money in an extremely risky digital option and saves the rest of the money in the risk-free asset. When bankruptcy is prohibited, the terminal wealth can be three-valued and the optimal strategy is to invest a small amount of money in an extremely risky digital option and put the rest in an asset with moderate risk. These strategies are not commonly seen in practice and are not appropriate for many investors. Furthermore, the optimal value (the risk) is independent of the expected terminal wealth target. Therefore the efficient frontier is a vertical line in the mean-risk plane and there is no explicit trade-off between risk and return. They conclude that using the WVaR on terminal wealth is not an appropriate model of risk for portfolio choice.

In contrast to \cite{he2015dynamic}, our model uses the expected target and risk measure on log-returns instead of terminal wealth. When the risk is evaluated by the VaR or ES, two popular risk measures, we find that the investor classifies market scenarios into different states, in which the terminal payoff is a multiple or fraction of the growth optimal portfolio, or constant. Furthermore, the efficient frontier is a concave curve that connects the min-risk portfolio with the growth optimal portfolio. Our model allows for an explicit characterization of the risk-return trade-off and may serve as a guideline for investors to set reasonable investment targets. Our results demonstrate that it is more appropriate to use the WVaR, in particular, the VaR and ES, on the log-return instead of the terminal wealth for portfolio choice.

\section{Conclusion}\label{sec:conclusion}
We have proposed and solved a dynamic mean-WVaR portfolio choice problem with risk measured to log-returns, as opposed to terminal wealth in \cite{he2015dynamic}. Our model conquers the ill-posedness of the mean-WVaR criterion for terminal wealth in \cite{he2015dynamic}, and allows for an explicit and meaningful characterization of the trade-off between return and risk. We have demonstrated that our proposed mean-WVaR criterion for log-returns is more appropriate and tractable than the mean-WVaR criterion for terminal wealth in serving as a guideline for dynamic portfolio choice.

\newpage

\appendix
\section{Useful Lemmas}\label{sec:A1}

The following function 
\begin{equation}\label{varphi}
\varphi ( s-; \lambda)= 
\begin{cases}
\frac{ \lambda w^{-1} (s)}{1+\lambda}, ~ & 0 \le s \le w(\alpha),\\
\frac{ 1+\lambda w^{-1} (s)}{1+\lambda}, ~ & w(\alpha) < s \le 1, 
\end{cases}
\end{equation}	
is increasing, continuous and convex on $[0, w(\alpha)]$ and on $(w(\alpha),1]$, respectively. It has an upward jump at $s=w(\alpha)$. 
\begin{lemma}\label{lemma: f0} 
If $\varphi$ is defined by \eqref{varphi} with $\lambda=0$. Then its convex envelope is given by 
\begin{equation*}
\delta ( s ; 0)=
\begin{cases}
0, ~ & 0 \le s \le w(\alpha),\\
\frac{ s-w(\alpha) }{ 1-w(\alpha)} , ~ & w(\alpha) < s \le 1. 
\end{cases}
\end{equation*}
\end{lemma}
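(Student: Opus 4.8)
The plan is to prove the identity not by grinding through the supremum in \eqref{eq:concave envelope} directly, but by verifying that the proposed piecewise-linear function satisfies the defining characterization stated just after \eqref{eq:concave envelope}: namely, that $\delta(\cdot\,;0)$ is the \emph{largest convex function dominated by} $\varphi(\cdot-;0)$. First I would substitute $\lambda=0$ into \eqref{varphi} to record the explicit form of the target function, which collapses to a step function with a single upward unit jump at $s=w(\alpha)$: we have $\varphi(s-;0)=0$ for $s\in[0,w(\alpha)]$ and $\varphi(s-;0)=1$ for $s\in(w(\alpha),1]$ (here the left-continuity convention forces $\varphi(w(\alpha)-;0)=0$).

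Next I would check that the candidate
\[
\delta(s;0)=
\begin{cases}
0, & 0\le s\le w(\alpha),\\
\dfrac{s-w(\alpha)}{1-w(\alpha)}, & w(\alpha)<s\le 1,
\end{cases}
\]
is admissible, i.e.\ convex and dominated by $\varphi(\cdot-;0)$. Convexity is immediate: $\delta(\cdot\,;0)$ is continuous and piecewise linear with slope $0$ on $[0,w(\alpha)]$ and slope $1/(1-w(\alpha))>0$ on $[w(\alpha),1]$, hence its (right) slope is nondecreasing. Domination is equally direct: on $[0,w(\alpha)]$ both sides equal $0$, while on $(w(\alpha),1]$ one has $\tfrac{s-w(\alpha)}{1-w(\alpha)}\le 1=\varphi(s-;0)$ since $s\le 1$. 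As a consistency check one sees $\{s:\delta(s;0)<\varphi(s-;0)\}=(w(\alpha),1)$, on which $\delta(\cdot\,;0)$ is affine, exactly matching the property noted after \eqref{eq:concave envelope}.

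The substantive step is \emph{maximality}: I must show that any convex $g$ with $g\le\varphi(\cdot-;0)$ satisfies $g\le\delta(\cdot\,;0)$. On $[0,w(\alpha)]$ this is automatic because $g\le\varphi(\cdot-;0)=0=\delta(\cdot\,;0)$; in particular $g(w(\alpha))\le 0$, and at the right endpoint $g(1)\le\varphi(1-;0)=1=\delta(1;0)$. For $s\in(w(\alpha),1)$ I would write $s$ as the convex combination $s=\tfrac{1-s}{1-w(\alpha)}\,w(\alpha)+\tfrac{s-w(\alpha)}{1-w(\alpha)}\cdot 1$ and apply convexity of $g$ together with these two endpoint bounds to obtain $g(s)\le\tfrac{s-w(\alpha)}{1-w(\alpha)}=\delta(s;0)$. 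This proves that the candidate dominates every convex minorant, so it is the convex envelope.

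The only point requiring care is the bookkeeping around the jump at $s=w(\alpha)$ and the left-continuity value $\varphi(w(\alpha)-;0)=0$: it is precisely this that pins $g(w(\alpha))\le 0$ and thereby forces the envelope to vanish up to $w(\alpha)$ before rising linearly to meet $\varphi$ at $s=1$. I expect this to be the only delicate part; the convexity and domination checks and the single convex-combination estimate are otherwise routine.
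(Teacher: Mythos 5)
Your proof is correct and takes essentially the same approach as the paper's: both verify that the piecewise-linear candidate is convex and dominated by the step function $\varphi(\cdot-;0)$, then establish maximality by writing each $s\in(w(\alpha),1]$ as the convex combination $\tfrac{1-s}{1-w(\alpha)}\,w(\alpha)+\tfrac{s-w(\alpha)}{1-w(\alpha)}\cdot 1$ and combining the endpoint bounds $g(w(\alpha))\le 0$, $g(1)\le 1$ with convexity of the competing minorant. The only cosmetic difference is that you apply Jensen's inequality directly to the competitor $g$, whereas the paper expands $\delta$ along the same convex combination before invoking convexity of $\psi$; the content is identical.
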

\begin{proof}
When $\lambda=0$, we have
\begin{equation*}
\varphi ( s-; 0)=
\left \{
\begin{aligned}
&0, ~ & 0 \le s \le w(\alpha),\\
&1 , ~ & w(\alpha) < s \le 1,
\end{aligned}
\right.
\end{equation*} 
It is straightforward to verify $\delta ( s ; 0) $ is convex. 
For any convex function $\psi(s)$ dominated by $\varphi (s-;0)$, 
it is clearly dominated on $[0, w(\alpha)]$ by $\delta (s;0)$ as $\varphi ( s-; 0)=\delta (s;0)$. Because 
$\delta ( w(\alpha) ; 0)=\varphi (w(\alpha)-; 0) \ge \psi (w(\alpha))$ 
and $\delta (1 ; 0)=\varphi (1-; 0) \ge \psi (1)$, 
for any $s \in (w(\alpha),1]$, we have 
\begin{align*}
\delta ( s ; 0) &=\frac{1-s}{1-w(\alpha)} \delta ( w(\alpha) ; 0)+\frac{s-w(\alpha)}{1-w(\alpha)} \delta ( 1 ; 0)\\
&=\frac{1-s}{1-w(\alpha)} \varphi ( w(\alpha)-; 0)+\frac{s-w(\alpha)}{1-w(\alpha)} \varphi ( 1-; 0)\\
&\ge \frac{1-s}{1-w(\alpha)} \psi (w(\alpha))+\frac{s-w(\alpha)}{1-w(\alpha)} \psi (1) \ge \psi (s).
\end{align*}
Therefore, $\delta ( s ; 0)$ is the largest convex function dominated by $\varphi ( s-; 0)$ on $[0,1]$, and thus is the convex envelope of $\varphi ( s-; 0)$. The left panel of Figure \eqref{figure:VaR varphi delta} gives a graphical illustration. 
\end{proof}

\begin{figure}[H]
\centering
\begin{minipage}[t]{3in}
\centering
\includegraphics[width=3in]{./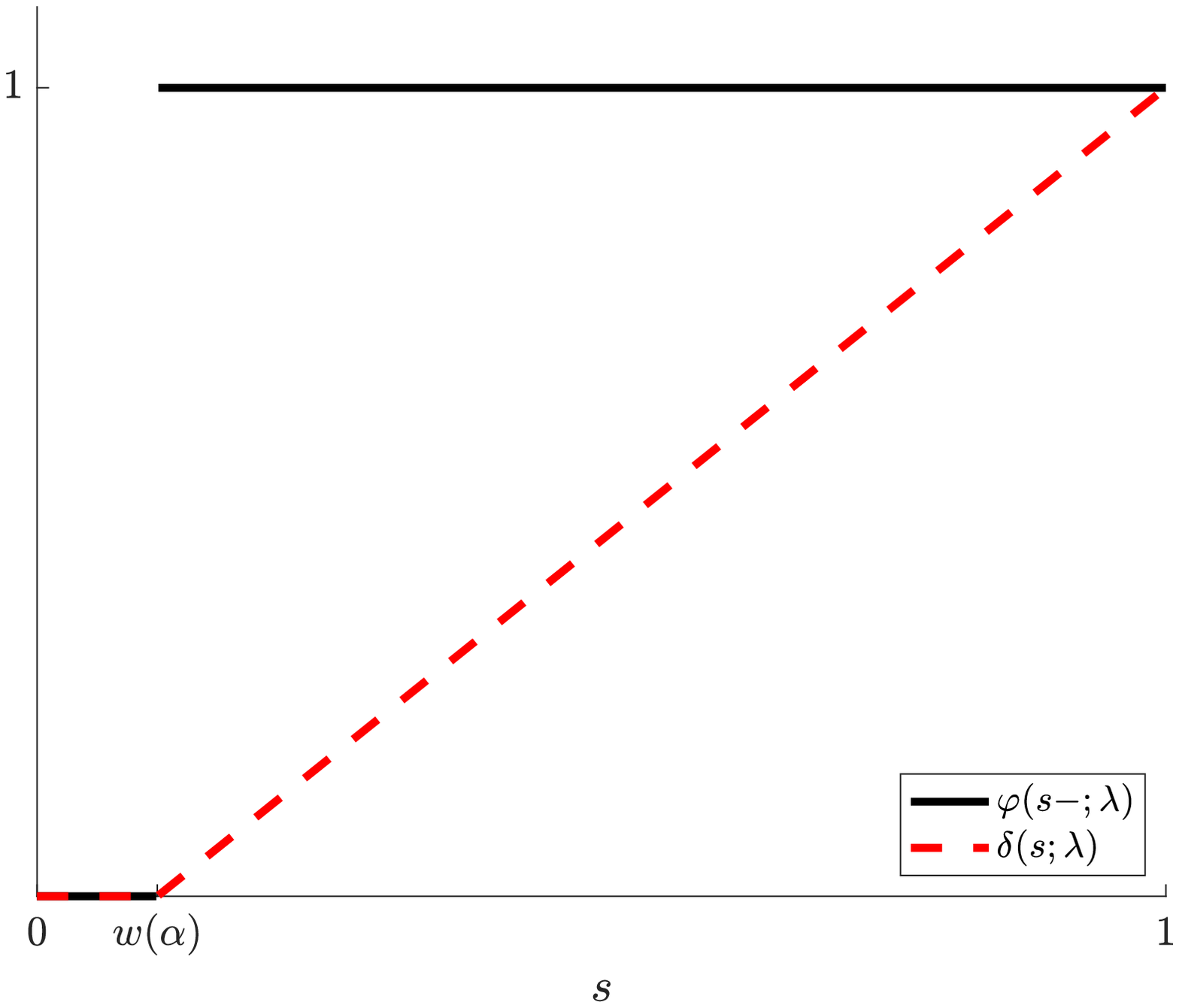}
\centerline{$\lambda=0$}
\end{minipage}
\begin{minipage}[t]{3in}
\centering
\includegraphics[width=3in]{./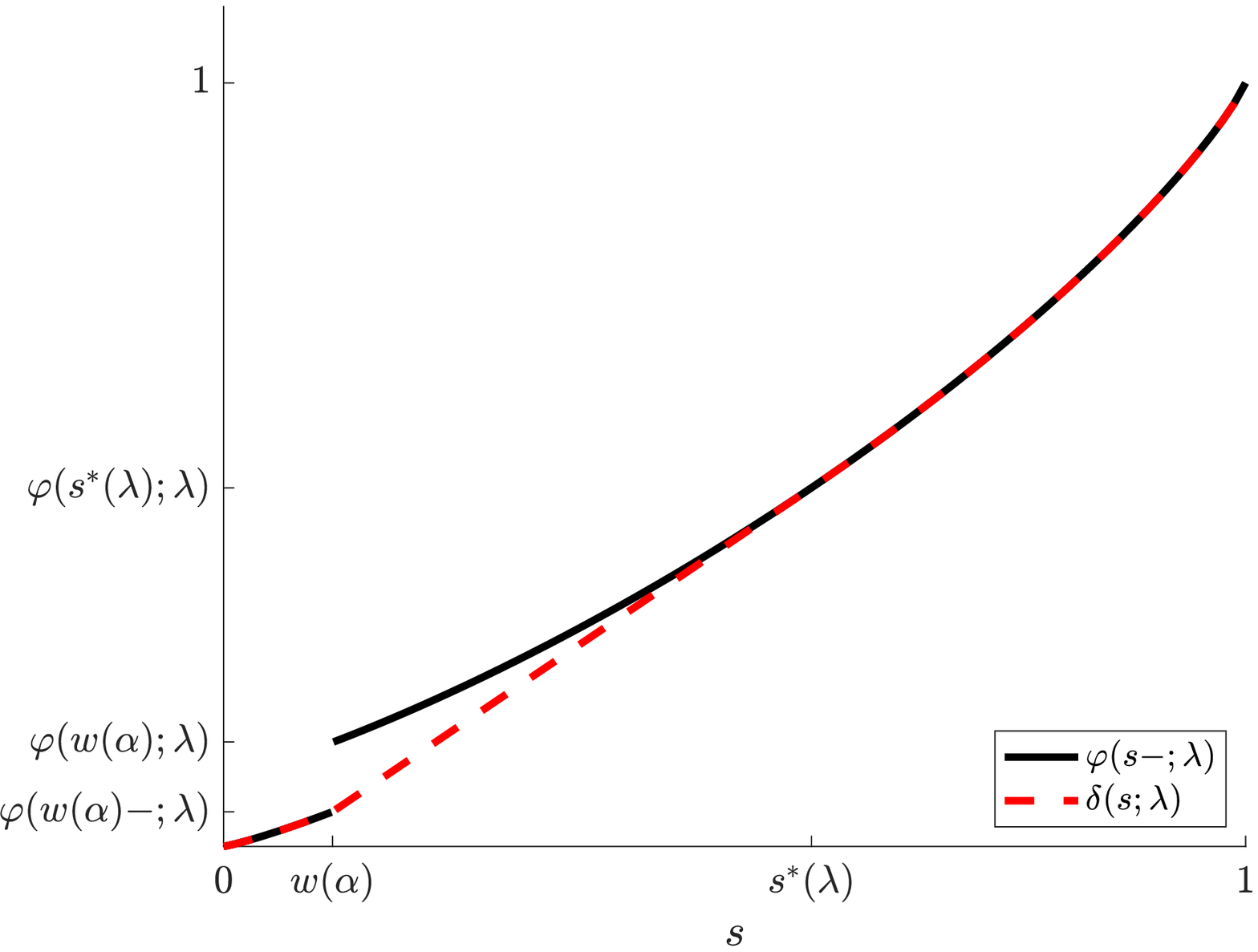}
\centerline{$0<\lambda<\infty$}\bigskip
\end{minipage}

\caption{$\varphi (s-,\lambda)$ and $\delta (s,\lambda)$ for VaR }\label{figure:VaR varphi delta}
\end{figure}

\begin{lemma}\label{lemma: f1} 
If $\varphi$ is defined by \eqref{varphi} with $0<\lambda<\infty$. Then its convex envelope is given by 
\begin{equation*}
\delta ( s ; \lambda)=
\begin{cases}
\varphi ( s-; \lambda), ~ & 0 \le s \le w(\alpha),\\
\varphi ( w(\alpha)-; \lambda)+\varphi'( s^{*}(\lambda) ; \lambda) (s-w(\alpha) ), ~ & w(\alpha) < s \le s^{*}(\lambda),\\
\varphi ( s-; \lambda) , ~ & s^{*}(\lambda) < s \le 1,
\end{cases}
\end{equation*} 
where $s^{*}(\lambda)$ is the unique number in $(w(\alpha),1)$ such that 
	\begin{equation*}
	\varphi'( s^{*}(\lambda)-; \lambda)=	\frac{ \varphi ( s^{*}(\lambda)-; \lambda)-\varphi ( w(\alpha)-; \lambda) }{ s^{*}(\lambda)-w(\alpha)}.
	\end{equation*} 
\end{lemma}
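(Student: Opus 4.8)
The plan is to mirror the proof of Lemma~\ref{lemma: f0}, inserting one preliminary step that establishes the existence and uniqueness of the tangency point $s^{*}(\lambda)$. First I would record the explicit data: on $[0,w(\alpha)]$ we have $\varphi(s-;\lambda)=\frac{\lambda w^{-1}(s)}{1+\lambda}$, on $(w(\alpha),1]$ we have $\varphi(s-;\lambda)=\frac{1+\lambda w^{-1}(s)}{1+\lambda}$, and on both open branches $\varphi'(s;\lambda)=\frac{\lambda (w^{-1})'(s)}{1+\lambda}$. The geometric content is that $\varphi$ has an upward jump of size $\frac{1}{1+\lambda}$ at $w(\alpha)$, so the convex envelope must, just past the jump, follow the tangent line drawn from the external point $\bigl(w(\alpha),\varphi(w(\alpha)-;\lambda)\bigr)=\bigl(w(\alpha),\tfrac{\lambda\alpha}{1+\lambda}\bigr)$ to the convex second branch, rejoining $\varphi$ at the point of tangency $s^{*}(\lambda)$.

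To pin down $s^{*}(\lambda)$, I would introduce
\[
\Psi(s)=\varphi(s-;\lambda)-\varphi(w(\alpha)-;\lambda)-\varphi'(s;\lambda)\,(s-w(\alpha)),\qquad s\in(w(\alpha),1),
\]
whose zeros are precisely the tangency points (the equation $\Psi(s)=0$ is the stated secant-equals-tangent condition after dividing by $s-w(\alpha)$). Since $\Psi'(s)=-\varphi''(s;\lambda)\,(s-w(\alpha))<0$ using $(w^{-1})''>0$, the function $\Psi$ is strictly decreasing and so has at most one zero. For the endpoints I would show $\Psi(w(\alpha)+)=\frac{1}{1+\lambda}>0$ (exactly the jump size) and $\Psi(1-)=-\infty$; the intermediate value theorem then yields the unique $s^{*}(\lambda)\in(w(\alpha),1)$. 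The blow-up $\Psi(1-)=-\infty$ is the one point needing care: it rests on $\varphi'(s;\lambda)\to+\infty$ as $s\uparrow1$, which follows from $w'(1)=G_{\xi}(0)/\E[\xi_T]=0$ (the log-normal kernel has vanishing left tail, $G_{\xi}(0)=0$), whence $(w^{-1})'(s)\to+\infty$ as $s\uparrow1$.

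With $s^{*}(\lambda)$ in hand, I would verify that the candidate $\delta(\cdot;\lambda)$ is the convex envelope in three steps, just as in Lemma~\ref{lemma: f0}. (i) \emph{Convexity}: $\delta$ equals the convex $\varphi$ on $[0,w(\alpha)]$ and on $[s^{*},1]$ and is affine in between; it is continuous because setting $\delta(w(\alpha))=\varphi(w(\alpha)-;\lambda)$ removes the jump and because the tangency condition makes the segment meet the second branch at $s^{*}$, while its slopes are nondecreasing across both junctions (the segment slope $\varphi'(s^{*};\lambda)$ exceeds $\varphi'(w(\alpha)-;\lambda)$ by monotonicity of $(w^{-1})'$, and equals $\varphi'(s^{*}+;\lambda)$ by tangency). (ii) \emph{Domination}: on the two outer pieces $\delta=\varphi(\cdot-;\lambda)$, while on $(w(\alpha),s^{*})$ the segment is the tangent line at $s^{*}$ to the convex second branch and hence lies below it. (iii) \emph{Maximality}: for any convex $\psi\le\varphi(\cdot-;\lambda)$, domination is immediate on the outer pieces, and on $(w(\alpha),s^{*})$ convexity of $\psi$ with $\psi(w(\alpha))\le\delta(w(\alpha))$, $\psi(s^{*})\le\delta(s^{*})$ and the affineness of $\delta$ there gives $\psi\le\delta$ by a convex-combination estimate.

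The main obstacle is the preliminary existence/uniqueness step, and within it the verification that $\Psi(1-)=-\infty$: this hinges on the blow-up of $(w^{-1})'$ at the right endpoint coming from the vanishing left tail of the log-normal pricing kernel, and it is precisely what guarantees that the tangency point is interior rather than the envelope degenerating to a single chord joining $w(\alpha)$ to $1$. Once $s^{*}(\lambda)$ is known to exist uniquely, the convex-envelope verification is a routine adaptation of Lemma~\ref{lemma: f0}.
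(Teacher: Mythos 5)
Your proposal is correct and follows essentially the same route as the paper: your $\Psi$ is precisely the paper's function $f_1(s;\lambda)$, and the structure (strict monotonicity of $\Psi$, the boundary limits $\frac{1}{1+\lambda}$ at $w(\alpha)+$ and $-\infty$ at $1-$, the intermediate value theorem for existence/uniqueness of $s^{*}(\lambda)$, then the convex-envelope verification by the Lemma~\ref{lemma: f0} argument) matches the paper step for step. The only cosmetic differences are that you establish monotonicity by differentiating ($\Psi'(s)=-\varphi''(s;\lambda)(s-w(\alpha))<0$) where the paper uses a Stieltjes integration-by-parts identity, and you spell out the blow-up $(w^{-1})'(s)\to\infty$ via $G_{\xi}(0)=0$, a point the paper leaves implicit.
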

\begin{proof} 
	Let
	$$f_1 (s; \lambda)=\varphi ( s-; \lambda)-\varphi ( w(\alpha)-; \lambda)-\varphi'( s ; \lambda) ( s-w(\alpha) ), ~ w(\alpha)<s<1,$$
	which is continuous in $s$. 
		For $w(\alpha)<s_1<s_2<1$, by integration by parts we have
	\begin{equation*}
	\begin{aligned}
		&\;f_1 (s_1; \lambda)-f_1 (s_2; \lambda) \\
		=&\; \varphi ( s_1-; \lambda)-\varphi ( s_2-; \lambda)-\varphi'( s_1 ; \lambda) ( s_1-w(\alpha) )+\varphi'( s_2 ; \lambda) ( s_2-w(\alpha) ) \\
		=&-\int_{s_1-}^{s_2-} d\varphi( z ; \lambda)+\int_{s_1}^{s_2} d\big(z\varphi'( z ; \lambda) \big)-w(\alpha) \left( \varphi'( s_2 ; \lambda)-\varphi'( s_1 ; \lambda) \right)\\
		=&-\int_{s_1}^{s_2} \varphi'( z-; \lambda) dz+\int_{s_1}^{s_2} z d\varphi'( z ; \lambda)+\int_{s_1}^{s_2} \varphi'( z ; \lambda) dz-w(\alpha) \int_{s_1}^{s_2} d \varphi'( z ; \lambda) \\
		=& \int_{s_1}^{s_2} ( z-w(\alpha) ) d\varphi'( z ; \lambda) >0,
	\end{aligned}
	\end{equation*}
thanks to the convexity of $\varphi( z ; \lambda)$ on $(w(\alpha),1]$, 
	so $f_1 (s; \lambda)$ is a strictly decreasing function of $s$. 
	Meanwhile, 
	\begin{align*}
	\lim_{s \downarrow w(\alpha)} f_1 (s; \lambda)
	&=\lim_{s \downarrow w(\alpha)} \frac{1+\lambda (w^{-1} (s)-\alpha)-\lambda (w^{-1} (s))'( s-w(\alpha))}{1+\lambda}=\frac{1}{1+\lambda},
	\end{align*}
	and 
		\begin{equation*}
	\lim_{s \uparrow 1} f_1 (s; \lambda)=\varphi ( 1 ; \lambda)-\varphi ( w(\alpha)-; \lambda)-( 1-w(\alpha) ) \lim_{s \uparrow 1}\varphi'( s ; \lambda)=-\infty.
	\end{equation*}
Therefore, there exists a unique $s^{*}(\lambda) \in (w(\alpha),1)$ such that $f_1 ( s^{*}(\lambda) ; \lambda)=0$. 

Thanks to the convexity of $\varphi( s ; \lambda)$ on $[0,w(\alpha)]$ and on $(w(\alpha),1]$, it is easy to verify that $\delta ( s ; \lambda)$ is a continuous convex function dominated by $\varphi ( s-; \lambda)$ on $[0,1]$. 
For any convex function $\psi(s)$ dominated by $\varphi (s-;\lambda)$, 
it is clearly dominated on $[0, w(\alpha)]\cup(s^{*}(\lambda),1]$ by $\delta (s;\lambda)$ as 
$\delta (s;\lambda)=\varphi (s-;\lambda)$. 
By the same argument as in the proof of Lemma \ref{lemma: f0}, we have that $\psi(s)$ is dominated by $\delta (s;\lambda)$ on $(w(\alpha), s^{*}(\lambda)]$ too. 
Therefore, $\delta ( s ; \lambda)$ is the largest convex function dominated by $\varphi ( s-; \lambda)$, and thus the convex envelope of $\varphi ( s-; \lambda)$. The right panel of Figure \eqref{figure:VaR varphi delta} gives a graphical illustration. 
\end{proof}

The following function 
\begin{equation}\label{varphi2}
\varphi ( s-; \lambda)=
\begin{cases}
\frac{ (\frac{1}{\alpha}+\lambda )w^{-1} (s)}{1+\lambda}, ~ & 0 \le s \le w(\alpha)\\
\frac{ 1+\lambda w^{-1} (s)}{1+\lambda}, ~ & w(\alpha) < s \le 1.
\end{cases}
\end{equation} 
is increasing, and convex on $[0, w(\alpha)]$ and on $(w(\alpha),1]$, respectively. It is continuous on $[0,1]$.

\begin{lemma}\label{lemma: f2} 
If $\varphi$ is defined by \eqref{varphi2} with $\lambda=0$. 
Then its convex envelope is given by 
\begin{equation*}
\delta ( s ; 0)=
\begin{cases}
\varphi ( s-; 0), ~ & 0 \le s \le t_0,\\
\varphi ( t_0-; 0)+\varphi'( t_0 ; 0) (s-t_0 ), ~ & t_0 < s \le 1,
\end{cases}
\end{equation*}
where $t_0$ is the unique number in $(0, w(\alpha))$ such that 
	\begin{equation*}
	\varphi'( t_0 ; 0)=	\frac{ 1-\varphi ( t_0-; 0) }{ 1-t_0}.
	\end{equation*}
\end{lemma}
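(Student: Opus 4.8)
The plan is to follow the template established in Lemmas \ref{lemma: f0} and \ref{lemma: f1}. First I would record the shape of the target function. Setting $\lambda=0$ in \eqref{varphi2} gives $\varphi(s-;0)=w^{-1}(s)/\alpha$ on $[0,w(\alpha)]$ and $\varphi(s-;0)=1$ on $(w(\alpha),1]$. Since $(w^{-1})'>0$ and $(w^{-1})''>0$ on $(0,1)$, the first piece is strictly increasing and strictly convex, rising from $\varphi(0-;0)=0$ to $\varphi(w(\alpha)-;0)=w^{-1}(w(\alpha))/\alpha=1$, after which $\varphi$ is flat at level $1$. Thus $\varphi$ has a concave kink at $w(\alpha)$ (the slope drops from a positive value to $0$), so a convex minorant cannot follow $\varphi$ all the way to $w(\alpha)$; the convex envelope must leave $\varphi$ at some $t_0<w(\alpha)$ along the tangent line to $\varphi$ at $t_0$, and this tangent must meet the flat level exactly at the right endpoint $(1,1)$. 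The defining equation for $t_0$ is precisely the statement that this tangent passes through $(1,1)$.

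For the \emph{existence and uniqueness of $t_0$}, I would introduce
\[ f_2(s)=\varphi(s-;0)+(1-s)\,\varphi'(s;0)-1,\qquad 0<s<w(\alpha), \]
the value at $s=1$ of the tangent line to $\varphi$ at $s$, shifted by the target level $1$; the root $f_2(t_0)=0$ is exactly the stated condition. Because $\varphi$ is $C^{\infty}$ on $(0,w(\alpha))$ with $\varphi''=(w^{-1})''/\alpha>0$, differentiating gives $f_2'(s)=(1-s)\varphi''(s)>0$, so $f_2$ is strictly increasing. For the endpoint signs: at the left end $\varphi(0-;0)=0$ and $\varphi'(0+;0)=(w^{-1})'(0+)/\alpha=0$ (since $w'(0+)=G_\xi(1)/\E[\xi_T]=+\infty$ for the log-normal $\xi_T$), so $f_2(0+)=-1<0$; at the right end $\varphi(w(\alpha)-;0)=1$ and $(w^{-1})'(w(\alpha))>0$, so $f_2(w(\alpha)-)=(1-w(\alpha))(w^{-1})'(w(\alpha))/\alpha>0$. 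The intermediate value theorem then yields a unique $t_0\in(0,w(\alpha))$. (If one prefers to avoid differentiating, the same monotonicity follows from the integration-by-parts computation used for $f_1$ in Lemma \ref{lemma: f1}.)

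It remains to confirm that the stated $\delta(\cdot;0)$ is the convex envelope, which I would check in three steps. Convexity: $\delta$ equals the convex $\varphi$ on $[0,t_0]$ and is affine on $[t_0,1]$, and the two pieces share the slope $\varphi'(t_0;0)$ at $t_0$ by construction, so $\delta$ is $C^1$ at $t_0$ and convex on $[0,1]$. Domination $\delta\le\varphi$: equality holds on $[0,t_0]$; on $(t_0,w(\alpha)]$ the tangent line lies below $\varphi$ by convexity of the first piece; and on $(w(\alpha),1]$ the line is increasing with $\delta(1)=1$ (the defining property of $t_0$), hence $\delta\le 1=\varphi$ there. Maximality: for any convex $\psi$ dominated by $\varphi(\cdot-;0)$, we have $\psi\le\varphi=\delta$ on $[0,t_0]$, while $\psi(t_0)\le\delta(t_0)$ and $\psi(1)\le\varphi(1-;0)=1=\delta(1)$; since $\delta$ is the affine interpolant of these endpoint values on $[t_0,1]$, convexity of $\psi$ gives $\psi(s)\le\frac{1-s}{1-t_0}\psi(t_0)+\frac{s-t_0}{1-t_0}\psi(1)\le\delta(s)$ for $s\in(t_0,1]$, exactly as in Lemma \ref{lemma: f0}. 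Hence $\delta(\cdot;0)$ is the largest convex minorant of $\varphi(\cdot-;0)$.

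The main obstacle is the existence and uniqueness of $t_0$: one must verify that $f_2$ is strictly monotone and that its endpoint signs are as claimed. The delicate input is the left-endpoint behaviour $\varphi'(0+;0)=0$, which hinges on $(w^{-1})'(0+)=0$, i.e.\ on the unbounded upper tail of the log-normal pricing kernel; this is what guarantees $f_2(0+)=-1<0$ and hence that the envelope genuinely detaches from $\varphi$ strictly inside $(0,w(\alpha))$ rather than at the boundary.
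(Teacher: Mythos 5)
Your proof is correct and follows essentially the same route as the paper's: your $f_2$ is just the negative of the paper's auxiliary function, your derivative computation $f_2'(s)=(1-s)\varphi''(s;0)>0$ is the differential form of the paper's integration-by-parts monotonicity argument, and your endpoint analysis (including the key fact $(w^{-1})'(0+)=0$, forced by the log-normal pricing kernel, and the value $(1-w(\alpha))(w^{-1})'(w(\alpha))/\alpha$ at the other end) reproduces the paper's two limits. The only cosmetic difference is that you spell out the convexity/domination/maximality verification that the paper delegates to the argument of Lemma \ref{lemma: f0}.
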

\begin{proof}
When $\lambda=0$, we have
\begin{equation*}
\varphi ( s-; 0)=
\left \{
\begin{aligned}
&\frac{1}{\alpha}w^{-1} (s), ~ & 0 \le s \le w(\alpha),\\
&1 , ~ & w(\alpha) < s \le 1,
\end{aligned}
\right.
\end{equation*} 
	Let
	$$f_2 (s)=1-\varphi (s-; 0)-\varphi'( s ; 0) ( 1-s ) , ~ 0<s<w(\alpha).$$
	For $0<s_1<s_2<w(\alpha)$, by integration by parts	we have
	\begin{equation*}
	\begin{aligned}
	f_2 (s_1)-f_2 (s_2) 
	=&-\varphi (s_1-; 0)+\varphi (s_2-; 0)-\varphi'( s_1 ; 0) ( 1-s_1 )+\varphi'( s_2 ; 0) ( 1-s_2 ) \\
	=& \int_{s_1}^{s_2} \varphi'( z ; 0) dz-\int_{s_1}^{s_2} d\big(z\varphi'( z ; 0)\big) +\int_{s_1}^{s_2} d \varphi'( z ; 0)\\
	=& \int_{s_1}^{s_2} ( 1-z ) d\varphi'( z ; 0) >0,
	\end{aligned}
	\end{equation*}
	again thanks to the convexity of $\varphi( z ; 0)$ on $[0, w(\alpha)]$, 
	so $f_2 (s)$ is a strictly decreasing function of $s$. Moreover, thanks to the strictly convexity of $w^{-1} (s)$, 
	\begin{equation*}
	\lim_{s \uparrow w(\alpha)} f_2 (s)=-\frac{1-w(\alpha)}{\alpha}\lim_{s \uparrow w(\alpha)}(w^{-1} (s))'< 0,
	\end{equation*}
	and
	\begin{equation*}
	\lim_{s \downarrow 0} f_2 (s)=1-\frac{1}{\alpha}\lim_{s \downarrow 0}(w^{-1} (s))'=1.
	\end{equation*}
	 Therefore, there exists a unique $t_0\in (0, w(\alpha))$ such that $f_2 ( t_0)=0$. By the same argument as in the proof of Lemma \ref{lemma: f0}, we can show that $\delta ( s-; 0)$ is the convex envelope of $\varphi ( s-; 0)$.
The left panel of Figure \eqref{figure:ES varphi delta} gives a graphical illustration.
\end{proof}

\begin{figure}[H]
\centering
\begin{minipage}[t]{3in}
\centering
\includegraphics[width=3in]{./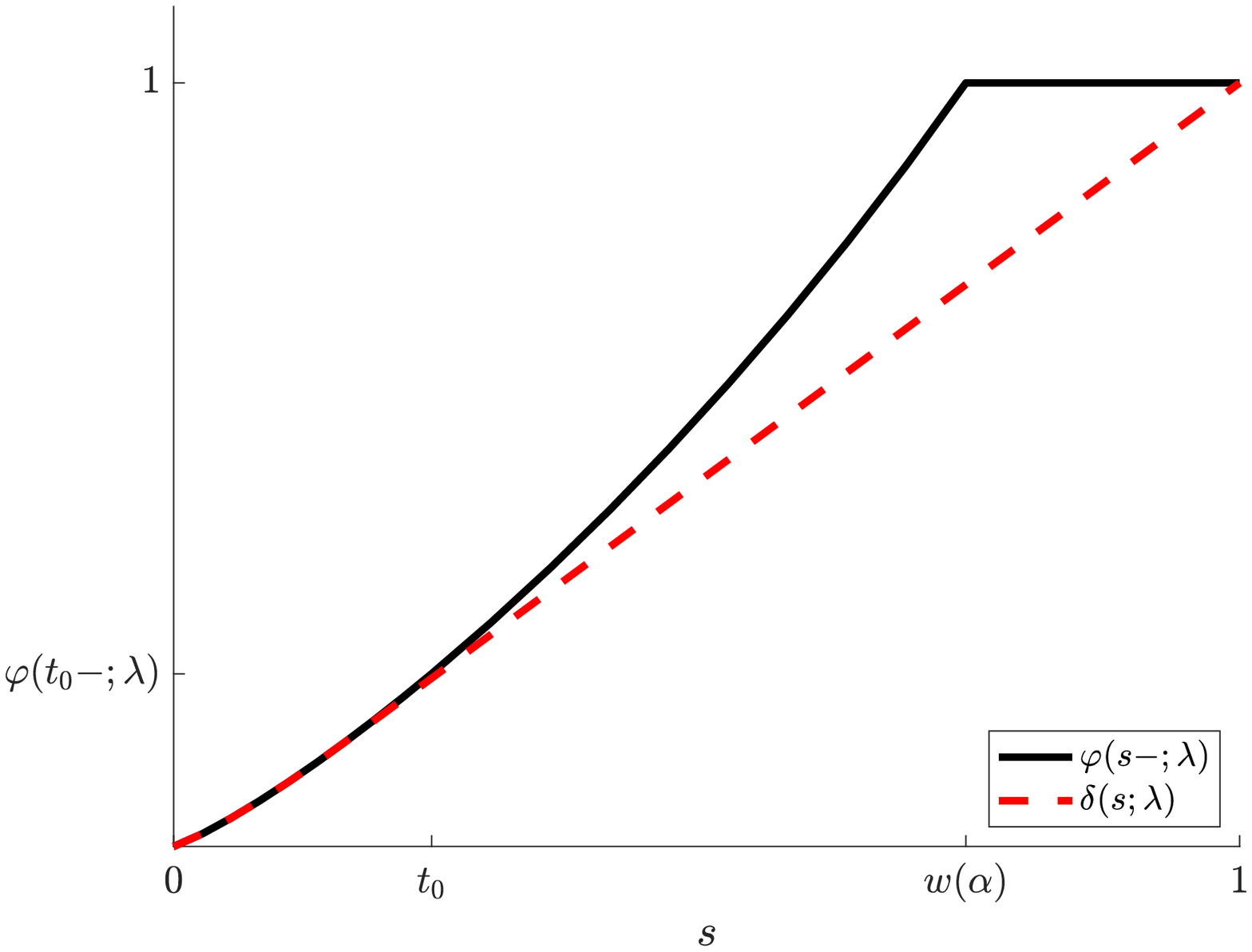}
\centerline{$\lambda=0$}
\end{minipage}
\begin{minipage}[t]{3in}
\centering
\includegraphics[width=3in]{./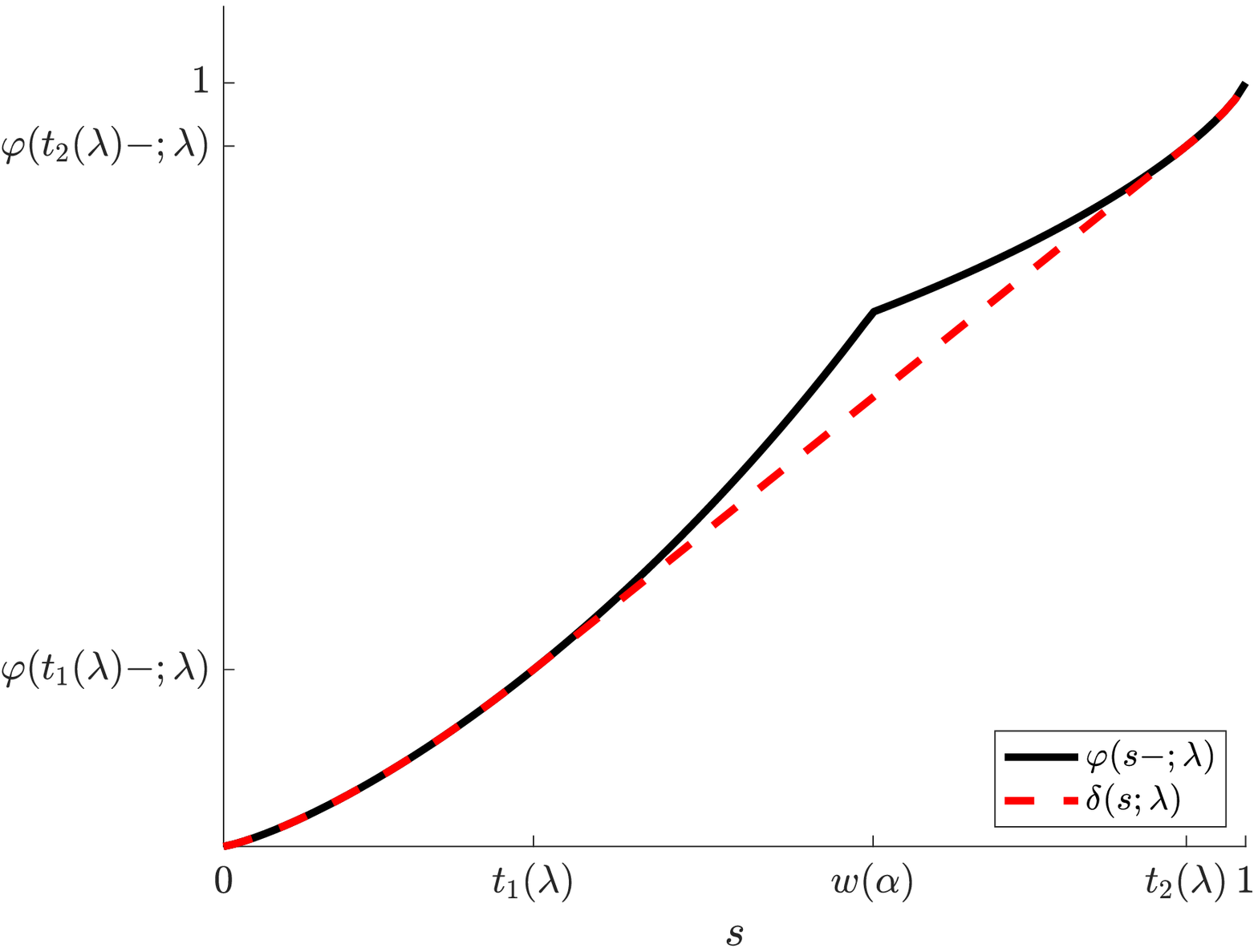}
\centerline{$0<\lambda<\infty$}\bigskip 
\end{minipage} 
\caption{$\varphi (s-,\lambda)$ and $\delta (s,\lambda)$ for ES }\label{figure:ES varphi delta}
\end{figure}

\begin{lemma}\label{lemma: f3} 
If $\varphi$ is defined by \eqref{varphi2} with $0<\lambda<\infty$. 
Then its convex envelope is given by 
\begin{equation*}
\delta ( s ; \lambda)=
\begin{cases}
\varphi ( s-; \lambda), ~ & 0 \le s \le t_1(\lambda),\\
\varphi ( t_1(\lambda)-; \lambda)+\varphi'( t_1(\lambda) ; \lambda) (s-t_1(\lambda) ), ~ & t_1(\lambda) < s \le t_2(\lambda),\\
\varphi ( s-; \lambda) , ~ & t_2(\lambda) < s \le 1,
\end{cases}
\end{equation*}
where $0<t_1 (\lambda)<w(\alpha)<t_2 (\lambda)<1$ are the unique pair such that 
	\begin{equation}\label{eq: t1 t2}
		\begin{cases}
		\varphi'( t_1(\lambda) ; \lambda) =\varphi'( t_2(\lambda) ; \lambda) \\
		\varphi'( t_1(\lambda) ; \lambda) =\frac{ \varphi ( t_2-; \lambda)-\varphi ( t_1-; \lambda)}{t_2-t_1}.
		\end{cases}
	\end{equation}
\end{lemma}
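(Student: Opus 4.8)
The plan is to identify the double-tangent structure of the convex envelope geometrically, to prove existence and uniqueness of the pair $(t_1(\lambda),t_2(\lambda))$ by reducing the coupled system \eqref{eq: t1 t2} to a single scalar equation, and finally to verify that the displayed $\delta(\cdot;\lambda)$ really is the largest convex minorant of $\varphi(\cdot-;\lambda)$, following the template of Lemma \ref{lemma: f0}.

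First I would record the geometry of $\varphi(\cdot-;\lambda)$ from \eqref{varphi2}. On $[0,w(\alpha)]$ its derivative is $\tfrac{1}{1+\lambda}(\tfrac1\alpha+\lambda)(w^{-1})'(s)$, whereas on $[w(\alpha),1]$ it is $\tfrac{\lambda}{1+\lambda}(w^{-1})'(s)$; since $\tfrac1\alpha+\lambda>\lambda$, the left derivative at $w(\alpha)$ strictly exceeds the right derivative, so $\varphi$ is continuous and strictly convex on each of the two pieces (using $(w^{-1})''>0$) but has a downward, concave kink at $w(\alpha)$. Hence the convex envelope must replace $\varphi$ on a neighbourhood of $w(\alpha)$ by a single affine segment tangent to the left piece at some $t_1<w(\alpha)$ and to the right piece at some $t_2>w(\alpha)$. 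Matching slopes and matching the secant through the two tangency points are precisely the two equations in \eqref{eq: t1 t2}, and this pins down the claimed three-part form of $\delta(\cdot;\lambda)$ as the only candidate.

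The main obstacle is establishing existence and uniqueness of the coupled pair $(t_1,t_2)$, and I would remove the coupling by parametrizing with the common slope $a$. Because $(w^{-1})'$ is continuous and strictly increasing (by $(w^{-1})''>0$), it is invertible; writing $\psi=((w^{-1})')^{-1}$ and setting $t_1(a)=\psi\!\big(\tfrac{(1+\lambda)a}{\frac1\alpha+\lambda}\big)$ and $t_2(a)=\psi\!\big(\tfrac{(1+\lambda)a}{\lambda}\big)$ makes the first equation in \eqref{eq: t1 t2} automatic, with $t_1(a)<t_2(a)$ and both increasing in $a$. The second equation becomes $h(a)=0$, where $h(a)=\varphi(t_2(a)-;\lambda)-\varphi(t_1(a)-;\lambda)-a\,(t_2(a)-t_1(a))$. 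Differentiating and using $\varphi'(t_1(a);\lambda)=\varphi'(t_2(a);\lambda)=a$, every term containing $t_1'(a)$ or $t_2'(a)$ cancels and one obtains $h'(a)=-(t_2(a)-t_1(a))<0$, so $h$ is strictly decreasing. For the sign change I would let $a$ range over the (nonempty, since $\tfrac1\alpha+\lambda>\lambda$) interval on which $t_1(a)\in(0,w(\alpha))$ and $t_2(a)\in(w(\alpha),1)$: as $t_1(a)\uparrow w(\alpha)$ the tangent to the strictly convex right piece at $t_2$ lies strictly below $\varphi$ at $w(\alpha)$, giving $h<0$; as $t_2(a)\downarrow w(\alpha)$ the tangent to the strictly convex left piece at $t_1$ lies strictly below $\varphi$ at $w(\alpha)$, giving $h>0$. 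That the root remains interior uses $(w^{-1})'(0)=0$ and $(w^{-1})'(1)=+\infty$ for log-normal $\xi_T$. Strict monotonicity of $h$ then yields a unique root $a^\ast$, hence a unique pair $0<t_1(\lambda)<w(\alpha)<t_2(\lambda)<1$.

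Finally I would confirm that the resulting $\delta(\cdot;\lambda)$ is the convex envelope. For convexity, note that $\delta'$ equals $\varphi'$ (increasing up to $a^\ast$) on $[0,t_1]$, equals the constant $a^\ast$ on $[t_1,t_2]$, and equals $\varphi'$ (increasing from $a^\ast$) on $[t_2,1]$, so $\delta'$ is nondecreasing and $\delta$ is convex. For domination, $\delta=\varphi$ on the two outer pieces, while on $[t_1,t_2]$ the affine part is the tangent line to a strictly convex piece on each of $[t_1,w(\alpha)]$ and $[w(\alpha),t_2]$ and therefore lies below $\varphi$ there. For maximality, given any convex $\psi$ dominated by $\varphi(\cdot-;\lambda)$, we have $\psi\le\varphi=\delta$ on the outer pieces, and on $[t_1,t_2]$ the chord argument of Lemma \ref{lemma: f0} (using $\psi(t_1)\le\delta(t_1)$, $\psi(t_2)\le\delta(t_2)$, and affinity of $\delta$ there) gives $\psi\le\delta$. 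Hence $\delta(\cdot;\lambda)$ is the largest convex function dominated by $\varphi(\cdot-;\lambda)$, i.e., its convex envelope, as claimed.
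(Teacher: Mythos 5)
Your proof is correct and follows essentially the same route as the paper's: both reduce the coupled system \eqref{eq: t1 t2} to a single scalar equation via the slope-matching relation, prove strict monotonicity of that scalar function, check a sign change at the two ends of the admissible range (your limits $t_2(a)\downarrow w(\alpha)$ and $t_1(a)\uparrow w(\alpha)$ are exactly the paper's limits $s\downarrow\underline{s}(\lambda)$ and $s\uparrow w(\alpha)$ for its function $f_3$), and then verify the envelope property by the Lemma \ref{lemma: f0} template. The only real difference is parametrization: you index the candidate double tangents by the common slope $a$, which turns the monotonicity step into the one-line cancellation $h'(a)=-(t_2(a)-t_1(a))<0$, whereas the paper indexes them by the left tangency point $s$ and proves monotonicity of the same function (indeed $f_3(s;\lambda)=h(\varphi'(s;\lambda))$) by integration by parts.
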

\begin{proof} 
Again by the same argument as in the proof of Lemma \ref{lemma: f0}, we can show that $\delta ( s-; \lambda)$ is the convex envelope of $\varphi ( s-; \lambda)$. The right panel of Figure \eqref{figure:ES varphi delta} gives a graphical illustration. It is only left to show the existence and uniqueness of the pair $0<t_1 (\lambda)<w(\alpha)<t_2 (\lambda)<1$ that satisfies \eqref {eq: t1 t2}.

Let 
$$\underline{s}(\lambda)=w\left( 1-F_{\xi} \left(\tfrac{\frac{1}{\alpha}+\lambda}{\lambda} G_{\xi} \left(1-\alpha \right) \right) \right).$$
In view of the strict monotonicity of $w$ and $F_{\xi}$ and positivity of $G_{\xi}$, we have $$\underline{s}(\lambda) < w\left( 1-F_{\xi} \big( G_{\xi} (1-\alpha ) \big) \right)=w(\alpha).$$

Define
	$$y(s)=w\left( 1-F_{\xi} \left(\tfrac{\lambda}{\frac{1}{\alpha}+\lambda} G_{\xi} \big(1-w^{-1}(s)\big) \right) \right), \quad s \in (\underline{s}(\lambda), w(\alpha)).$$
Then by monotonicity, for $s \in (\underline{s}(\lambda),w(\alpha))$,
\begin{align*}
y(s) &=w\left( 1-F_{\xi} \left(\tfrac{\lambda}{\frac{1}{\alpha}+\lambda} G_{\xi} \big(1-w^{-1}(s)\big) \right) \right) \\
&> w\left( 1-F_{\xi} \left(\tfrac{\lambda}{\frac{1}{\alpha}+\lambda} G_{\xi} \big(1-w^{-1}(\underline{s}(\lambda))\big) \right) \right) \\
&= w\left( 1-F_{\xi} \left(\tfrac{\lambda}{\frac{1}{\alpha}+\lambda} G_{\xi} \left( F_{\xi} \left(\tfrac{\frac{1}{\alpha}+\lambda}{\lambda} G_{\xi} \left(1-\alpha \right) \right) \right) \right) \right) \\
&= w\left( 1-F_{\xi} \left(G_{\xi} \left(1-\alpha \right) \right) \right) \\
& = w(\alpha).
\end{align*}
and it is not hard to verify that 
\begin{align}\label{eqprime}
\varphi'( s ; \lambda)=\varphi'( y(s) ; \lambda).
\end{align}	

	Let
	$$f_3 (s; \lambda)=\varphi (y(s)-; \lambda)-\varphi ( s-; \lambda)-\varphi'( s ; \lambda) ( y(s)-s ), ~ 0<s<w(\alpha).$$
	For $\underline{s}(\lambda) < s_1 < s_2 < w(\alpha)$, thanks to \eqref{eqprime} and using integration by parts, we have
	\begin{align*}
	&\quad\; f_3 (s_1; \lambda)-f_3 (s_2; \lambda) \\
	&=\varphi (y(s_1)-; \lambda)-\varphi (y(s_2)-; \lambda)-( \varphi ( s_1-; \lambda)-\varphi ( s_2-; \lambda) ) \\
	&\quad\;-\varphi'( s_1 ; \lambda) ( y(s_1)-s_1 )+\varphi'( s_2 ; \lambda) ( y(s_2)-s_2 ) \\
	&=-\int_{y(s_1)}^{y(s_2)} d\varphi( z-; \lambda)+\int_{s_1}^{s_2} d\big(\varphi'( z ; \lambda) y(z) \big)+\int_{s_1}^{s_2} d\varphi( z- ; \lambda) -\int_{s_1}^{s_2} d\big(\varphi'( z ; \lambda)z\big)\\
	&=
	-\int_{y(s_1)}^{y(s_2)} \varphi'( z ; \lambda) dz
	+\int_{s_1}^{s_2} d\big(\varphi'( y(z) ; \lambda) y(z) \big)-\int_{s_1}^{s_2}zd \varphi'( z ; \lambda) \\
	&=-\int_{y(s_1)}^{y(s_2)} \varphi'( z ; \lambda) dz 
	+\int_{y(s_1)}^{y(s_2)} d\big(\varphi'( z ; \lambda) z\big)-\int_{s_1}^{s_2}zd \varphi'( z ; \lambda) \\
	&=\int_{y(s_1)}^{y(s_2)} zd\varphi'( z ; \lambda)-\int_{s_1}^{s_2}zd \varphi'( z ; \lambda) \\ 
	&=\int_{s_1}^{s_2} y(z)d\varphi'( y(z) ; \lambda)+\int_{s_1}^{s_2}zd \varphi'( z ; \lambda)\\
	&=\int_{s_1}^{s_2}(y(z)-z)d \varphi'( z ; \lambda)>0
	\end{align*}
	by virtue of $y(z)>w(\alpha)>z$ for $z\in (\underline{s}(\lambda), w(\alpha))$ and $\varphi$ is convex on $(0, w(\alpha))$, 
	so $f_3 (s; \lambda)$ is a strictly decreasing function of $s$. Moreover, thanks to the strictly convexity of $\varphi$ $(0, w(\alpha))$ and on $(w(\alpha), 1)$, respectively, and the fact that $\lim_{s \downarrow \underline{s}(\lambda)} y(s) = w(\alpha)$, we have
	\begin{equation*}
\begin{aligned}
\lim_{s \downarrow \underline{s}(\lambda)} f_3 (s; \lambda) =& \lim_{s \downarrow \underline{s}(\lambda)} \left( \int_{w(\alpha)}^{y(s)} \varphi'( z ; \lambda)dz + \int_s^{w(\alpha)} \varphi'( z ; \lambda)dz - \varphi'( s ; \lambda) \left( y(s) - s \right) \right) \\
=& \int_{\underline{s}(\lambda)}^{w(\alpha)} \varphi'( z ; \lambda)dz - \varphi'( \underline{s}(\lambda) ; \lambda) \left( w(\alpha) - \underline{s}(\lambda) \right) \\
=& \int_{\underline{s}(\lambda)}^{w(\alpha)} \left( \varphi'( z ; \lambda)- \varphi'( \underline{s}(\lambda) ; \lambda) \right)dz >0, 
\end{aligned}
	\end{equation*}
	and
\begin{equation*}
\begin{aligned}
\lim_{s \uparrow w(\alpha)} f_3 (s; \lambda) =& \lim_{s \uparrow w(\alpha)} \left( \int_{w(\alpha)}^{y(s)} \varphi'( z ; \lambda)dz + \int_s^{w(\alpha)} \varphi'( z ; \lambda)dz - \varphi'( s ; \lambda) \left( y(s) - s \right) \right) \\
=& \int_{w(\alpha)}^{y(w(\alpha)-)} \varphi'( z ; \lambda)dz - \varphi'( w(\alpha)- ; \lambda) \left( y(w(\alpha)-) - w(\alpha) \right) \\
=& \int_{w(\alpha)} ^{y(w(\alpha)-)} \left( \varphi'( z ; \lambda)-\varphi'( y(w(\alpha)-) ; \lambda) \right)dz < 0.
\end{aligned}
\end{equation*}
	Therefore, there exists a unique $t_1 (\lambda) \in (\underline{s}(\lambda),w(\alpha))$ that solves $f_3 ( t_1 (\lambda) ; \lambda)=0$. Let $t_2 (\lambda)=y(t_1 (\lambda))$. Then, $t_1 (\lambda)$ and $t_2 (\lambda)$ solve \eqref{eq: t1 t2}.
\end{proof}

\section{Proofs}\label{sec:A2}

\begin{proof}[Proof of Lemma \ref{lemma:3.1}]
This is indeed contained in the proof of Theorem 7, \cite{xu2014}. We give it here to make the paper self-contained. 
Suppose $X_T^{*}$ is optimal to \eqref{prob:martingale} and $ \E [\xi_T X_T^{*}] < x$. Define $$\tilde{X}_T=X_T^{*}+\frac{x-\E [\xi_T X_T^{*}]}{\E [\xi_T]},$$ 
which satisfies the budget constraint. Clearly $\tilde{X}_T>X_T^{*}$. 
Let $R_T^{*}$ and $\tilde{R}_T$ be the corresponding log-returns of $X_T^{*}$ and $\tilde{X}_T$, respectively. Then 
\begin{equation*}
\begin{aligned}
\tilde{R}_T &> R_T^{*}, ~ a.s.,\\
\E [ \tilde{R}_T] &> \E [ R_T^{*}],\\
\rho _{\Phi} (\tilde{R}_T) &< \rho _{\Phi} (R_T^{*}), 
\end{aligned}
\end{equation*}
and consequently
$$ \lambda \E [ \tilde{R}_T]-\rho _{\Phi} (\tilde{R}_T) > \lambda \E [R_T^{*}]-\rho _{\Phi} (R_T^{*}), $$ contradicting the optimality of $X_T^{*}$ to \eqref{prob:martingale}. 
\end{proof}


\begin{proof}[Proof of Proposition \ref{prop:3.1}]
We handle the two cases $0<\lambda<\infty$ and $\lambda=0$ separately.	
\begin{description}
\item[Case $0<\lambda<\infty$.] Define
\begin{equation*}
\mathbb{H}_1=\big\{H \in \mathbb{G}\colon H(s)>0, ~ \forall s\in (0,1] \big\} \cap \mathbb{H}_{\Phi}.
\end{equation*}
It suffices to consider the following optimization problem
\begin{equation*}
\max_{H \in \mathbb{H}_1 } L(H(\cdot) ; \lambda, \eta),
\end{equation*}
as $L(H(\cdot) ; \lambda, \eta)=-\infty,$ if $H \in 	\mathbb{G} \setminus	\mathbb{H}_1$.

Because $\delta (s; \lambda)$ is the convex envelope of $\varphi ( s-; \lambda)$, $\varphi ( s-; \lambda) \ge \delta ( s ; \lambda), ~ s \in[0,1]$, $\delta (0; \lambda)=\varphi ( 0-; \lambda)=0$, and $\delta (1; \lambda)=\varphi ( 1-; \lambda)=1$. For any $H \in \mathbb{H}_1$, we have
\begin{equation*}
\int _{[0,1)} ( \varphi (s-; \lambda)-\delta (s; \lambda) ) d \ln H(s) \ge 0.
\end{equation*}
From Fubini's theorem, we have
\begin{equation*}
\begin{aligned}
&\int _{[0,1)} \big( \varphi (s-; \lambda)-\delta (s; \lambda) \big) d \ln H(s) \\
=& \int _{[0,1)} \ln H(s) d \delta (s; \lambda)-\int _{[0,1)} \ln H(s) d \varphi (s; \lambda) \ge 0.
\end{aligned}
\end{equation*}
and consequently
\begin{align*}
L(H(\cdot) ; \lambda, \eta)&= (1+\lambda) \int _{[0,1)} \ln H(s) d\varphi ( s ; \lambda)-\eta \int_{[0,1)} H (s) ds \\
&\le (1+\lambda) \int _{[0,1)} \ln H(s) d \delta (s; \lambda)-\eta \int_{[0,1)} H (s) ds \\
&= (1+\lambda) \int _{[0,1)} \ln \left( H(s) \right) \delta' (s; \lambda) d s-\eta \int_{[0,1)} H (s) ds \\ 
&\le (1+\lambda) \int _{[0,1)} \ln \left( H^{*} (s; \lambda , \eta) \right) \delta' (s; \lambda) d s-\eta \int_{[0,1)}H^{*} (s; \lambda , \eta) ds,
\end{align*}
where $H^{*} (s; \lambda , \eta)=\frac{1+\lambda}{\eta } \delta' (s; \lambda)$ is given by point-wise optimization. We then show that 
$$L( H^{*} ( \cdot ; \lambda , \eta) ; \lambda, \eta)=(1+\lambda) \int _{[0,1)} \ln \left( H^{*} (s; \lambda , \eta) \right) \delta' (s; \lambda) d s-\eta \int_{[0,1)}H^{*} (s; \lambda , \eta) ds,$$
which is equivalent to
\begin{equation*} 
\int _{[0,1)} \ln H^{*} (s; \lambda , \eta) d\varphi ( s ; \lambda)=\int _{[0,1)} \ln H^{*} (s; \lambda , \eta) d \delta (s; \lambda), 
\end{equation*}
or by Fubini's theorem, 
\begin{equation*}
\int _{(0,1)} \big( \varphi (s-; \lambda)-\delta (s; \lambda) \big) d \ln H^{*} (s; \lambda , \eta)=0,
\end{equation*}
namely,
\begin{equation}\label{eq:verify equality}
\int _{(0,1)} \big( \varphi (s-; \lambda)-\delta (s; \lambda) \big) d \ln \big( \delta' (s; \lambda) \big)=0.
\end{equation}

Because $\delta' (s; \lambda)$ is constant on any sub-interval of $\{s \in (0, 1) : \varphi (s-; \lambda) > \delta (s; \lambda) \}$, the above identity \eqref{eq:verify equality} holds.

Finally, we need to verify $H^{*} (\cdot; \lambda , \eta) \in 	\mathbb{H}_1$. If $\Phi (\{0\})=0$, then
\begin{equation*}
\mathbb{H}_1=\big \{H \in \mathbb{G} \colon H(s)>0, ~ \forall s\in (0,1] \big\};
\end{equation*}
if $\Phi (\{0\})>0$, then
\begin{equation*}
\mathbb{H}_1=\big \{H \in \mathbb{G} \colon H(0)>0\big\}.
\end{equation*}

Because $0<\lambda<\infty$, $\varphi (s-; \lambda)>0$ for $s>0$. We claim $ \delta' (s; \lambda)>0$ for $s>0$. Otherwise, we have 
$$s_1=\sup \big\{ s \in [0,1]\colon \delta' (s; \lambda)=0 \big\}>0.$$
Since 
$\delta (1; \lambda) >\delta (0; \lambda) $, we have $s_1 <1$. It follows from the convexity of $\delta (s; \lambda)$ that $ \delta' (s; \lambda)=0$ for $s<s_1$ and $ \delta' (s; \lambda)>0$ for $s>s_1$. This means $\delta (s; \lambda)$ is not affine around $s_{1}$. 
But $\varphi (s_{1}-; \lambda)>0=\delta (s_{1}; \lambda)$, so $\delta (s; \lambda)$ should be affine in the neighborhood of $s_1$, leading to a contradiction.

We need to additionally show $\delta' (0; \lambda)>0$ if $\Phi (\{0\})=0$. In fact, $\varphi (0; \lambda)> \varphi (0-; \lambda)=\delta (0; \lambda)=0$. If $\delta' (0; \lambda)=0$, then $\varphi (s-; \lambda)> \delta (s; \lambda)$ and $\delta (s; \lambda)$ is affine for $s$ sufficiently close to $0$, which contradicts the fact $ \delta' (s; \lambda)>0$ for $s>0$.

\item[Case $\lambda=0$.] In view of \eqref{eq:-infty integral}, we need to solve the following optimization problem
\begin{equation*}
\max_{H \in \mathbb{H}_{\Phi} } ~L(H(\cdot) ; 0 , \eta).
\end{equation*}
Let
\begin{equation*}
z_{\Phi}=\sup \big\{ z \in [0,1]\colon \Phi ( [0,z] )=0 \big\}.
\end{equation*}
We then have $\Phi ([0,z_{\Phi} ))=0$ and $\varphi ( z-; 0)=\Phi ([0,w^{-1} (z)))=0$, for all $z \in [0, w(z_{\Phi} )]$. According to \eqref{eq:concave envelope}, we know that $\delta ( z; 0)=\varphi ( z-; 0)=0$, for all $z \in [0, w(z_{\Phi} )]$. There are two subcases.
\begin{itemize}
\item If $\Phi ( \{ z_{\Phi} \}) >0$, then $\mathbb{H}_{\Phi}=\{H\in \mathbb{G}\colon H(w(z_{\Phi}))>0 \}$
and
\begin{equation*}
\begin{aligned}
L(H(\cdot) ;0, \eta)=& \int _{[w(z_{\Phi}),1)} \ln H(s) d\varphi ( s ; 0)-\eta \int_{[0,1)} H (s) ds \\
=& \int _{[w(z_{\Phi}),1)} \ln H(s) d\Phi ([0,w^{-1} (s)])-\eta \int_{[0,1)} H (s) ds. 
\end{aligned}
\end{equation*}
Similar to the case $0<\lambda<\infty$, we can show, for any $H \in \mathbb{H}_{\Phi}$,
\begin{equation*}
\begin{aligned}
& \int _{[w(z_{\Phi}),1)} \ln H(s) d \delta (s; \lambda)-\int _{[w(z_{\Phi}),1)} \ln H(s) d \varphi (s; \lambda) \\
=&\int _{[w(z_{\Phi}),1)} \big( \varphi (s-; \lambda)-\delta (s; \lambda) \big) d \ln H(s) \ge 0.
\end{aligned}
\end{equation*}
We then have 	
\begin{equation*}
\begin{aligned}
L(H(\cdot) ; 0, \eta)=& \int _{[w(z_{\Phi}),1)} \ln H(s) d\varphi ( s ; 0)-\eta \int_{[0,1)} H (s) ds \\
\le & \int _{[w(z_{\Phi}),1)} \ln H(s) d\delta ( s ; 0)-\eta \int_{[0,1)} H (s) ds \\
=& \int _{[w(z_{\Phi}),1)} \ln \left( H(s) \right) \delta' (s; 0) d s-\eta \int_{[0,1)} H (s) ds \\ 
\le & \int _{[w(z_{\Phi}),1)} \ln \left( H^{*} (s; 0 , \eta) \right) \delta' (s; \lambda) d s-\eta \int_{[0,1)}H^{*} (s; 0 , \eta) ds,
\end{aligned}
\end{equation*}
where 
\begin{equation*}
H^{*} (s; \lambda , \eta)=\frac{1}{\eta } \delta' (s; \lambda)=
\begin{cases}
0, ~ &0 \le z < w(z_{\Phi}), \\
\frac{1}{\eta } \delta' (s; \lambda), ~ & w(z_{\Phi}) \le z \le 1,
\end{cases}
\end{equation*}
is given by point-wise optimization. We can similarly show
\begin{equation*}
L( H^{*} ( \cdot ; 0 , \eta) ; 0, \eta)=\int _{[w(z_{\Phi}),1)} \ln \left( H^{*} (s; 0 , \eta) \right) \delta' (s; 0) d s-\eta \int_{[0,1)}H^{*} (s; 0 , \eta) ds
\end{equation*}
by verifying
\begin{equation*}
\int _{(w(z_{\Phi}),1)} \big( \varphi (s-; 0)-\delta (s; 0) \big) d \ln \big( \delta' (s; 0) \big)=0.
\end{equation*}

We now verify $H^{*} (\cdot; \lambda , \eta) \in 	\mathbb{H}_{\Phi}$ by showing $\delta' (w(z_{\Phi}); \lambda)>0$. Similar to the case $0<\lambda<\infty$, we can show $\delta' (s; 0)>0$ for $ s>w(z_{\Phi})$. Because $\varphi (w(z_{\Phi}); 0)> \varphi (w(z_{\Phi})-; 0)=\delta (w(z_{\Phi});0)=0$, we can prove $\delta' (w(z_{\Phi}); 0)>0$ in a similar fashion.

\item If $\Phi ( \{ z_{\Phi} \})=0$, then $\delta ( w(z_{\Phi} ); 0)=\varphi ( w(z_{\Phi} )-; 0)=\varphi ( w(z_{\Phi} ) ; 0)=0$, $\mathbb{H}_{\Phi}=\{H\in \mathbb{G}\colon H(s)>0, ~ \forall s>w(z_{\Phi}) \}$
and
\begin{equation*}
L(H(\cdot) ;0, \eta)=\int _{(w(z_{\Phi}),1)} \ln H(s) d\varphi ( s ; 0)-\eta \int_{[0,1)} H (s) ds.
\end{equation*}
Similar to the case $\lambda=0$ and $\Phi ( \{ z_{\Phi} \}) >0$, we can show, for any $H \in \mathbb{H}_{\Phi}$,
\begin{align*}
L(H(\cdot) ; 0, \eta)&= \int _{(w(z_{\Phi}),1)} \ln H(s) d\varphi ( s ; 0)-\eta \int_{[0,1)} H (s) ds \\
&\le \int _{(w(z_{\Phi}),1)} \ln H(s) d\delta ( s ; 0)-\eta \int_{[0,1)} H (s) ds \\
&\le \int _{(w(z_{\Phi}),1)} \ln \left( H^{*} (s; 0 , \eta) \right) \delta' (s; \lambda) d s-\eta \int_{[0,1)}H^{*} (s; 0 , \eta) ds,
\end{align*}
where 
\begin{equation*}
H^{*} (s; \lambda , \eta)=\frac{1}{\eta } \delta' (s; \lambda)=
\begin{cases}
0, ~ &0 \le z < w(z_{\Phi}), \\
\frac{1}{\eta } \delta' (s; \lambda), ~ & w(z_{\Phi}) \le z \le 1,
\end{cases}
\end{equation*}
is given by point-wise optimization. We can similarly show
$$L( H^{*} ( \cdot ; 0 , \eta) ; 0, \eta)=\int _{(w(z_{\Phi}),1)} \ln \left( H^{*} (s; 0 , \eta) \right) \delta' (s; 0) d s-\eta \int_{[0,1)}H^{*} (s; 0 , \eta) ds.$$
Finally, it is easy to verify $\delta' (s; \lambda)>0$ for $s>w(z_{\Phi})$ and thus $H^{*} (\cdot; 0 , \eta) \in \mathbb{H}_{\Phi}$.
\end{itemize}
\end{description} 
The proof is complete. 
\end{proof}

\begin{proof}[Proof of Proposition \ref{prop:3.2}]
For any $H \in \mathbb{G}$ that satisfies the budget constraint in \eqref{prob:quantile H}, we have
\begin{align*}
& \quad\; \lambda \int_{[0,1)} \ln H(s) dw^{-1} (s)+\int _{[0,1)} \ln H(s) d \Phi ([0,w^{-1} (s)]) \\
&=L\left(H(\cdot) ; \lambda , \frac{1+\lambda}{x}\E [\xi_T] \right)+\frac{1+\lambda}{x}\E [\xi_T] \int_{[0,1)} H(s)ds \\
&=L\left(H(\cdot) ; \lambda , \frac{1+\lambda}{x}\E [\xi_T] \right)+(1+\lambda). 
\end{align*}
Similarly, 
\begin{align*}
&\quad\; \lambda \int_{[0,1)} \ln H^{*} \left(s; \lambda , \frac{1+\lambda}{x}\E [\xi_T] \right) dw^{-1} (s) \\
&\quad\quad+\int _{[0,1)} \ln H^{*} \left(s; \lambda , \frac{1+\lambda}{x}\E [\xi_T] \right) d \Phi ([0,w^{-1} (s)])\\
&=L\left(H^{*} \left(\cdot; \lambda , \frac{1+\lambda}{x}\E [\xi_T] \right) ; \lambda , \frac{1+\lambda}{x}\E [\xi_T]\right)+(1+\lambda). 
\end{align*}
By virtue of Proposition \ref{prop:3.1}, the claim is proved. 
\end{proof}


\begin{proof}[Proof of Proposition \ref{prop:4.1}] 
First, we have
\begin{equation*}
\varphi ( s-; \lambda)=\frac{ \Phi ([0,w^{-1} (s)))+\lambda w^{-1} (s)}{1+\lambda}=
\begin{cases}
\frac{ \lambda w^{-1} (s)}{1+\lambda}, ~ & 0 \le s \le w(\alpha),\\
\frac{ 1+\lambda w^{-1} (s)}{1+\lambda}, ~ & w(\alpha) < s \le 1.
\end{cases}
\end{equation*}

Next, let $\delta (s; \lambda)$ be the convex envelope of $\varphi ( s-; \lambda)$, and $ \delta' (s; \lambda)$, the right derivative of $\delta (s; \lambda)$ with respect to $s$.
\begin{description}
\item[Case $\lambda=0$.] By Lemma \ref{lemma: f0}, we have 
\begin{equation*}
\delta'( s ; 0)=
\begin{cases}
0, ~ & 0 \le s < w(\alpha),\\
\frac{ 1}{ 1-w(\alpha)} , ~ & w(\alpha) \le s < 1.
\end{cases}
\end{equation*}
It follows from Proposition \ref{prop:efficient} that the optimal solution is given by
\begin{align*}
X^{\text{VaR}}_{T,0} &= \frac{x}{\E [\xi_T]} \delta' (w (1-F_{\xi}(\xi_T)); 0)\\
&=
\begin{cases}
0, ~ & \xi_T > G_{\xi} (1-\alpha),\\
\frac{x}{\E [\xi_T]} \cdot \frac{ 1}{ 1-w(\alpha)} , ~ & \xi_T \le G_{\xi} (1-\alpha).
\end{cases}	
\end{align*}

\item[Case $0<\lambda<\infty$.] By Lemma \ref{lemma: f1}, we have 
\begin{equation*}
\delta' ( s; \lambda)=
\begin{cases}
\varphi'( s ; \lambda), ~ & 0 \le s < w(\alpha),\\
\varphi'( s^{*}(\lambda) ; \lambda), ~ & w(\alpha) < s \le s^{*}(\lambda),\\
\varphi'( s ; \lambda) , ~ & s^{*}(\lambda) < s < 1.
\end{cases}
\end{equation*}
It follows that 
\begin{align*}
X^{\text{VaR}}_{T,\lambda}&= \frac{x}{\E [\xi_T]} \delta' (w (1-F_{\xi}(\xi_T)); \lambda)\\
&=
\begin{cases}
\frac{\lambda}{1+\lambda} \cdot \frac{x}{\xi_T} , ~ & \xi_T > G_{\xi} (1-\alpha),\\
\frac{\lambda}{1+\lambda} \cdot \frac{x}{ G_{\xi} (1-w^{-1}(s^{*}(\lambda) ))} , ~ & G_{\xi} (1-w^{-1}(s^{*}(\lambda) )) < \xi_T \le G_{\xi} (1-\alpha) ,\\
\frac{\lambda}{1+\lambda} \cdot \frac{x}{\xi_T} , ~ & \xi_T \le G_{\xi} (1-w^{-1}(s^{*}(\lambda) )).
\end{cases}
\end{align*}
\end{description}

The rest of the claim is straightforward to verify.
\end{proof}


\begin{proof}[Proof of Corollary \ref{coro:4.1}]
Conditional on $\mathcal{F}_t$, $\ln \xi_T $ is normally distributed with mean $\ln \xi_t- \left(r+\frac{\theta ^2 }{2} \right)(T-t) $ and variance $\theta ^2 (T-t)$.

For $\lambda=0$, we have 
\begin{equation*}
X^{\text{VaR}}_{t,0}=\frac{1}{\xi_t} \E \left[ \xi_T X^{\text{VaR}}_{T,0}\;|\; \mathcal{F}_t\right]=\frac{1}{\xi_t} \E \left[ \xi_T \underline{X}_{\text{VaR}} \mathbf{1}_{ \xi_T \le \xi_{\alpha} }\;|\; \mathcal{F}_t\right]=\underline{X}_{\text{VaR}} e^{-r (T-t)} N \left( d_2 (t, \xi_t, \xi_{\alpha}) \right).
\end{equation*}
Applying Ito's lemma to $X^{\text{VaR}}_{t,0}$, we have
\begin{equation*}
\begin{aligned}
d X^{\text{VaR}}_{t,0}=(\cdots)dt+\frac{\underline{X}_{\text{VaR}} e^{-r (T-t)} \nu \left( d_2 (t, \xi_t, \xi_{\alpha}) \right)}{ \sqrt{T-t} } dW_t. 
\end{aligned}
\end{equation*}
Comparing the coefficient of the $dW_t$ term with \eqref{eq:budget}, we arrive at the expression for $\pi^{\text{VaR}}_{t,0}$.

For $0<\lambda<\infty$, we have 
\begin{align*}
X^{\text{VaR}}_{t,\lambda}&=\frac{1}{\xi_t} \E\left[\xi_T X^{\text{VaR}}_{T,\lambda}\;\big|\; \mathcal{F}_t\right] \\ 
&=\frac{1}{\xi_t} \E \left[ \frac{\lambda}{1+\lambda} x \mathbf{1}_{\xi_T > \xi_{\alpha}}+\xi_T \underline{X}_{\text{VaR}} \mathbf{1}_{\underline{\xi}_{\text{VaR}} < \xi_T \le \xi_{\alpha}}+\frac{\lambda}{1+\lambda} x \mathbf{1}_{\xi_T \le \underline{\xi}_{\text{VaR}} }\;\bigg|\; \mathcal{F}_t\right]\\
&=\frac{1}{\xi_t} \cdot \frac{\lambda}{1+\lambda} x N \left(-d_1 (t, \xi_t, \xi_{\alpha}) \right)\\
&\quad\;+\underline{X}_{\text{VaR}} e^{-r(T-t)} \left( N \left( d_2 (t, \xi_t, \xi_{\alpha}) \right)-N \left( d_2 ( t, \xi_t, \underline{\xi}_{\text{VaR}} ) \right) \right)\\
&\quad\;+\frac{1}{\xi_t} \cdot \frac{\lambda}{1+\lambda} x N \left( d_1 (t, \xi_t, \underline{\xi}_{\text{VaR}} ) \right).
\end{align*}
Applying Ito's lemma to $X^{\text{VaR}}_{t,\lambda}$ and comparing the coefficient of the $dW_t$ term with \eqref{eq:budget}, we have
\begin{align*}
\pi^{\text{VaR}}_{t,\lambda}&= \frac{\lambda}{1+\lambda} \cdot \frac{x}{\xi_t} \cdot \Bigg( N\left(-d_1(t, \xi_t, \xi_{\alpha}) \right) \frac{\theta}{\sigma}-\frac{\nu \left( d_1 (t, \xi_t, \xi_{\alpha}) \right) }{\sigma \sqrt{T-t}}\\
&\qquad\qquad\qquad\quad+N\big(-d_1(t, \xi_t, \underline{\xi}_{\text{VaR}}) \big) \frac{\theta}{\sigma}+\frac{\nu \big( d_1 (t, \xi_t, \underline{\xi}_{\text{VaR}} ) \big)}{\sigma \sqrt{T-t}} \Bigg) \\
&\quad\;+\frac{\underline{X}_{\text{VaR}} e^{-r (T-t)} }{\sigma \sqrt{T-t} } \cdot \left( \nu \left( d_2 ( t, \xi_t, \xi_{\alpha}) \right)-\nu \big( d_2 (t, \xi_t, \underline{\xi}_{\text{VaR}}) \big) \right).
\end{align*}
Noting that
$$ \nu (d_1 (t, \xi_t, y) )=e^{-r (T-t)} \frac{\xi_t}{y}\nu (d_2 (t, \xi_t, y) ),$$
we arrive at the expression for $\pi^{\text{VaR}}_{t,\lambda}$.
\end{proof}


\begin{proof}[Proof of Proposition \ref{prop:4.2}]
As the proof is similar to that of Proposition \ref{prop:4.1} by appealing to Lemmas \ref{lemma: f2} and \ref{lemma: f3}, we omit the details. In view of the fact that $\ln \xi_T $ is normally distributed with mean $-\left( r+\frac{\theta ^2 }{2} \right)T $ and variance $\theta ^2 T$, it is straightforward to obtain $\text{ES}_{\alpha}(R^{\text{ES}}_{T,0})$ and $\text{ES}_{\alpha}(R^{\text{ES}}_{T,\lambda})$.
\end{proof}

\bibliographystyle{apalike}

\end{document}